\def\s{\sigma}
\def\p{\partial}
\def\ot{\otimes}
\def\odots{\ot\cdots\ot}
\def\ad{\mathop{\rm ad}\nolimits}
\newcommand{\G}[1]{\mathfrak{#1}}
\newcommand{\C}[1]{\mathcal{#1}}
\newcommand{\B}[1]{\mathbb{#1}}
\renewcommand{\geq}{\geqslant}
\numberwithin{equation}{section}
\newtheorem{theorem}{Theorem}[section]
\newtheorem{proposition}[theorem]{Proposition}
\newtheorem{lemma}[theorem]{Lemma}
\theoremstyle{definition}
\begin{document}
\title{Matched Pair Analysis of Euler-Poincar\'{e} Flow on Hamiltonian Vector Fields}
\author{Oğul Esen}
\address{Department of Mathematics, Gebze Technical University,  41400 Gebze-Kocaeli, Turkey}
\email{oesen@gtu.edu.tr}

\author{Cristina Sardon Mu\~noz}
\address{Department of Applied Mathematics
Universidad Polit\'ecnica de Madrid
C/ Jos\'e Guti\'errez Abascal, 2, 28006, Madrid. Spain}
\email{mariacristina.sardon@upm.es
}

\author{Marcin Zajac}
\address{Department of Mathematical Methods in Physics, University of Warsaw, ul. Pasteura 5, 02-093 Warsaw, Poland.}
\email{marcin.zajac@fuw.edu.pl}
\date{}

\maketitle

\begin{abstract}
In this paper we provide a matched pair decomposition of the space of symmetric contravariant tensors $\mathfrak{T}\mathcal{Q}$. From this procedure two complementary Lie subalgebras of $\mathfrak{T}\mathcal{Q}$ under \textit{mutual} interaction arise. Introducing a lift operator, the matched pair decomposition of the space of Hamiltonian vector fields is determined. According to these realizations, Euler-Poincar\'{e} flows on such spaces are decomposed into two subdynamics: one of which is the Euler--Poincar\'{e} formulation of isentropic fluid flows, and the other one corresponds with Euler--Poincar\'{e} equations on higher order contravariant tensors ($n\geq 2$). \\[0.2cm]
 \textsl{MSC}: 17B66, 37K30. \\
 \textsl{Keywords}: Matched pair Lie algebras; symmetric contravariant tensors; Hamiltonian vector fields; Euler-Poincar\'{e} equations.
\end{abstract}

\maketitle
\tableofcontents
\setlength{\parindent}{0cm}
\setlength{\parskip}{1em}
\onehalfspace

\section{Introduction}

The modern geometric approach of classical mechanics has considered Lagrangian dynamics on a geometric framework defined over tangent bundles. Here, the positions and the velocities of the dynamical system are gathered in the tangent space $T\mathcal{Q}$ of the configuration manifold $\mathcal{Q}$ of the generalized position coordinates \cite{de2011methods,  holm2008geometric, libermann2012symplectic, Ol93}. If the system possesses a Lie group symmetry, then a reduction procedure is possible and one arrives at a reduced system with reduced dynamics (i.e. reduced number of degrees of freedom) on the quotient space \cite{marsden1993lagrangian}. This specific procedure is known as Lagrangian reduction \cite{n2001lagrangian}. If the configuration space is a Lie group $G$ and the Lagrangian function on the tangent bundle $TG$ governing the dynamics is invariant under the group action, then the reduction reads a Lagrangian on the Lie algebra $\mathfrak{g}$. This process is called Euler-Poincar\'{e} reduction, whereas the reduced dynamics is given by the so-called Euler-Poincar\'{e} equations \cite{holm1998euler,MarsdenRatiu-book}. Many physical theories fit this geometric background, e.g., the theory of the rigid body and fluid and plasma theories \cite{abraham1978foundations, arnold1989mathematical, marsden1982group,vizman2008geodesic}.

The problem addressed in this work is the decoupling of Euler-Poincar\'{e} dynamics available on the space of Hamiltonian vector fields. This can be regarded as the Lagrangian realization of the Vlasov dynamics of plasma particles. To observe this, we start with a bunch of plasma particles at rest on $\mathcal{Q}\subset \mathbb{R}^3$. Now, consider the cotangent bundle denoted by $T^*\mathcal{Q}$, which is by definition the dual of $T\mathcal{Q}$ and it admits a canonical symplectic two form. The configuration group of Vlasov's plasma is the group ${\rm Diff_{can}}(T^*\mathcal{Q})$ of  diffeomorphisms 
preserving the canonical form, and this group is infinite dimensional. The Lie algebra can be identified with the space of Hamiltonian vector fields $\mathfrak{X}_{ham}(T^*\mathcal{Q})$ equipped with the minus of the Jacobi-Lie bracket. The minus sign is the manifestation of the symmetry due to the right action of the group \cite{holm1998euler}. 
Physically, the right symmetry corresponds with the particle relabelling of plasma particles. In \cite{holm2009geodesic}, a quadratic Lagrangian function is introduced  on $\mathfrak{X}_{ham}(T^*\mathcal{Q})$ by means  of an operator so that a geodesic flow is obtained.  

One can identify $\mathfrak{X}_{ham}(T^*\mathcal{Q})$ with the space of (smooth) functions modulo constants. In this case, the dual space of the Lie algebra turns out to be the space of densities on $T^*\mathcal{Q}$ \cite{marsden1983hamiltonian}. This identification enables us to recast the Vlasov  motion into Hamiltonian (Lie-Poisson) form \cite{MaWe81}. In a recent study \cite{EsSu21}, it was shown that the Vlasov Lie-Poisson equation can be decoupled into two subdynamics under mutual interactions in a non-trivial way. The geometry introduced in this study is the theory of matched pair Lie algebras,
which permits to (de)couple two Lie algebras under mutual interaction \cite{Maji90,Maji90-II,Majid-book}. Such analysis is beyond the semidirect theory, since there is only a one-sided action in this theory \cite{cendra1998lagrangian,
	n2001lagrangian, holm1998euler, esen2014tulczyjew,MarsRatiWein84}. The abstract framework is the Hamiltonian matched pair theory given in \cite{EsSu16}, and referring to that decomposition, it is shown in \cite{EsSu21} that the constitutive subdynamics of the Hamiltonian formulation of the motion of plasma is  obtained as isentropic compressible fluid motion and the dynamics of the kinetic moments of the plasma density function of order $n\geq 2$. Therefore, such an algebraic/geometric decomposition of the Hamiltonian Vlasov theory is consistent with the physical intuition. 
	
In this study, our aim is to present a similar decomposition for the geodesic Vlasov motion \cite{holm2009geodesic} in terms of the Euler-Poincar\'{e} formulation and obtain a matched pair decomposition of this system. The abstract framework of  Lagrangian matched pair theory is already available in the literature \cite{EsenSutl17} and it provides matched Euler-Poincar\'{e} equations for classical Lie algebras. This paper achieves to provide a physical example to shed some light over this algebraic construction. 
	
\textbf{Notation.} We denote the Lie algebras by $\mathfrak{g}$, $\mathfrak{h}$ and $\mathfrak{K}$. Throughout the work, we shall adopt the letters%
\begin{equation*}\label{G}
\xi,\tilde{\xi} \in \mathfrak{g},\qquad \mu,\tilde{\mu}
 \in\mathfrak{g}^{\ast} , \qquad \eta,\tilde{\eta}, \in \mathfrak{g},\qquad \nu,\tilde{\nu}
 \in\mathfrak{h}^{\ast}, \qquad x,\tilde{x} \in \mathfrak{K}
\end{equation*}
as elements of the given spaces. We shall use the notation $ad_{\xi}\tilde{\xi}=[\xi,\tilde{\xi}]$ for the infinitesimal left adjoint representation and we denote the infinitesimal coadjoint action of $\mathfrak{g}$ on $\mathfrak{g}^{\ast}$ by $%
ad_{\xi}^{\ast}$. The latter is defined to be the minus of the linear algebraic dual of $ad_\xi$, that is $\langle ad_{\xi}^{\ast}\mu,\tilde{\xi}\rangle =-\langle
\mu,ad_{\xi}\tilde{\xi} \rangle$ for all $\xi, \tilde{\xi}$ in $\mathfrak{g}$ and $\mu\in\mathfrak{g}^{\ast}$.

\section{Coupling of Euler-Poincar\'{e} Dynamics} \label{Sec-mp}

In this subsection we recall the basics on matched pairs of Lie groups, Lie algebras, and Lie coalgebras. There is an extensive literature on these subjects, see e.g. \cite{LuWe90,Maji90-II,Maji90,Majid-book,Mich80,Ta81,Zhan10} for further details. We also refer the reader to \cite{EsSu16,EsenSutl17, esen2018matched}.

\subsection{Matched Pair Lie Algebras}\label{Sec-mpla}~

A matched pair Lie algebra $\mathfrak{g}\bowtie \mathfrak{h}$ is a Lie algebra containing two non-intersecting Lie algebras $\mathfrak{g}$ and $\mathfrak{h}$ under mutual interaction. We represent the mutual actions by
\begin{equation} \label{Lieact}
\vartriangleright:\mathfrak{h}\otimes\mathfrak{g}\rightarrow \mathfrak{g},\quad \eta\otimes \xi \mapsto \eta\vartriangleright \xi, \qquad  \vartriangleleft:\mathfrak{h}\otimes\mathfrak{g}\rightarrow \mathfrak{h}, \quad \eta\otimes \xi \mapsto \eta\vartriangleleft \xi.
\end{equation}
In this case, the Lie algebra bracket defined on $\mathfrak{g}\bowtie \mathfrak{h}$ is given by 
\begin{equation}\label{mpla}
\lbrack  \xi \oplus \eta ,\, \tilde{\xi}\oplus\tilde{\eta} ]=\big( [\xi,\tilde{\xi}]+\eta\vartriangleright \tilde{\xi}-\tilde{\eta}\vartriangleright \xi\big) \oplus \big(
[\eta,\tilde{\eta}]+\eta\vartriangleleft \tilde{\xi}-\tilde{\eta}\vartriangleleft \xi\big).  
\end{equation}
The Jacobi identity for the matched pair Lie algebra bracket manifests the following compatibility conditions
\begin{equation} \label{compcon-mpl}
\begin{split}
\eta \vartriangleright \lbrack \xi,\tilde{\xi}]=[\eta \vartriangleright
\xi,\tilde{\xi}]+[\xi,\eta \vartriangleright \tilde{\xi}]+(\eta
\vartriangleleft \xi)\vartriangleright \tilde{\xi}-(\eta \vartriangleleft
\tilde{\xi})\vartriangleright \xi, \\
\lbrack \eta,\tilde{\eta}]\vartriangleleft\xi =[\eta,\tilde{\eta}\vartriangleleft\xi ]+[\eta
\vartriangleleft\xi ,\tilde{\eta}]+\eta\vartriangleleft (\tilde{\eta}\vartriangleright \xi
)-\tilde{\eta}\vartriangleleft (\eta\vartriangleright \xi ).
\end{split}
\end{equation}
Now, we present here the following proposition, which will be useful in upcoming sections. We refer the reader to \cite[Prop. 8.3.2]{Majid-book} for further details. 

\begin{proposition} \label{universal-prop}
	Let $\mathfrak{s}$ be a Lie algebra with two Lie subalgebras $\mathfrak{g}$ and $\mathfrak{h}$ such that $\mathfrak{s}$ is isomorphic to the direct sum of $\mathfrak{g}$ and $\mathfrak{h}$ as vector spaces with the vector addition property in $\mathfrak{s}$. Then $\mathfrak{s}$ is isomorphic to the matched pair $\mathfrak{g}\bowtie\mathfrak{h}$ as Lie algebras, and the mutual actions are derived from 
	\begin{equation} \label{mab-defn}
	[\eta,\xi]=(\eta\rtimes \xi )\oplus( \eta\ltimes \xi).
	\end{equation}
	Here, the inclusions of the subalgebras are defined to be 
	\begin{equation}
	\mathfrak{g} \longrightarrow \mathfrak{s}: \xi \mapsto (\xi\oplus 0),\qquad \mathfrak{h} \longrightarrow \mathfrak{s}: \eta \mapsto (0\oplus \eta).
	\end{equation}
\end{proposition}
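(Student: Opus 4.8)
The plan is to use the vector-space splitting $\mathfrak{s} \cong \mathfrak{g} \oplus \mathfrak{h}$ to read the matched-pair actions straight off the bracket of $\mathfrak{s}$, and then to verify that the abstract bracket \eqref{mpla} reproduces the bracket of $\mathfrak{s}$ term by term. First I would fix, for each $\eta \in \mathfrak{h}$ and $\xi \in \mathfrak{g}$, the unique decomposition of the cross bracket $[\eta,\xi]$ relative to the direct sum; this is exactly \eqref{mab-defn}, and its $\mathfrak{g}$-component $\eta \rtimes \xi$ and $\mathfrak{h}$-component $\eta \ltimes \xi$ are taken as the candidate actions $\eta \vartriangleright \xi := \eta \rtimes \xi \in \mathfrak{g}$ and $\eta \vartriangleleft \xi := \eta \ltimes \xi \in \mathfrak{h}$. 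Uniqueness of the decomposition makes these maps well defined, and they are bilinear because the bracket is bilinear and the projections $\mathfrak{s} \to \mathfrak{g}$ and $\mathfrak{s} \to \mathfrak{h}$ are linear.

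Next I would expand a generic bracket $[\xi \oplus \eta, \tilde{\xi} \oplus \tilde{\eta}]$ in $\mathfrak{s}$ by bilinearity into the four pieces $[\xi,\tilde{\xi}]$, $[\xi,\tilde{\eta}]$, $[\eta,\tilde{\xi}]$ and $[\eta,\tilde{\eta}]$. Since $\mathfrak{g}$ and $\mathfrak{h}$ are subalgebras, $[\xi,\tilde{\xi}] \in \mathfrak{g}$ and $[\eta,\tilde{\eta}] \in \mathfrak{h}$, while the two mixed terms are resolved via \eqref{mab-defn}, using antisymmetry to rewrite $[\xi,\tilde{\eta}] = -[\tilde{\eta},\xi]$. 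Collecting the $\mathfrak{g}$-parts and the $\mathfrak{h}$-parts then gives precisely the matched-pair bracket \eqref{mpla}, so the vector-space isomorphism intertwines the bracket of $\mathfrak{s}$ with the candidate matched-pair bracket.

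It remains to check that $(\mathfrak{g},\mathfrak{h},\vartriangleright,\vartriangleleft)$ really is a matched pair, i.e. that $\vartriangleright$ is a left $\mathfrak{h}$-action, $\vartriangleleft$ a right $\mathfrak{g}$-action, and the compatibility conditions \eqref{compcon-mpl} hold. Rather than verify these by hand, I would observe that $\mathfrak{s}$ is a genuine Lie algebra, so its bracket obeys the Jacobi identity; projecting the Jacobi identity for mixed triples (two elements from one subalgebra, one from the other) onto the summands $\mathfrak{g}$ and $\mathfrak{h}$ yields exactly the module axioms and the two relations in \eqref{compcon-mpl}. Hence $\mathfrak{g} \bowtie \mathfrak{h}$ is well defined and the identity on the underlying vector space is the desired Lie-algebra isomorphism $\mathfrak{s} \cong \mathfrak{g} \bowtie \mathfrak{h}$. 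The only real obstacle is the bookkeeping in this last step --- tracking which projection each term lands in and handling the sign coming from antisymmetry --- but no new idea beyond the direct-sum decomposition is required, so I would carry out the bracket matching in full and simply note that the matched-pair axioms fall out of the projected Jacobi identity.
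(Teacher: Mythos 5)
Your argument is correct and is precisely the standard proof of this universal property: the paper itself gives no proof, deferring to Majid's book (Prop.\ 8.3.2), and the argument recorded there is exactly yours --- read the actions off the mixed bracket via the direct-sum projections, match the four bilinear pieces of $[\xi\oplus\eta,\tilde{\xi}\oplus\tilde{\eta}]$ against \eqref{mpla}, and extract the action axioms and the compatibility conditions \eqref{compcon-mpl} by projecting the Jacobi identity on mixed triples onto $\mathfrak{g}$ and $\mathfrak{h}$. No gaps; the only care needed is the sign from $[\xi,\tilde{\eta}]=-[\tilde{\eta},\xi]$, which you handle explicitly.
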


Now, recall the Lie algebra actions in (\ref{Lieact}) and take the left action $\vartriangleright$ and fix an element $\eta$ in $\mathfrak{h}$: this results in a linear mapping, denoted by $\eta \vartriangleright$, on $\mathfrak{g}$ and the linear algebraic dual of this mapping which reads
\begin{equation} \label{eta-star}
\overset{\ast }{\vartriangleleft} \eta:\mathfrak{g}^*\mapsto \mathfrak{g}^*, 
\qquad \langle \mu \overset{\ast }{\vartriangleleft} \eta, \xi \rangle=\langle \mu, \eta \vartriangleright \xi \rangle.
\end{equation}
This is the right representation of $\mathfrak{h}$ on $\mathfrak{g}^*$. 
On the other hand, by fixing $\xi\in \mathfrak{g}$, we define a linear mapping $\mathfrak{b}_\xi$ from $\mathfrak{h}$ to $\mathfrak{g}$ as
\begin{equation} \label{b}
\mathfrak{b}_\xi: \mathfrak{h} \mapsto \mathfrak{g},\qquad \mathfrak{b}_\xi(\eta)=\eta\vartriangleright \xi.
\end{equation}
The dual of this mapping is
\begin{equation} \label{b*}
\mathfrak{b}_\xi^*:\mathfrak{g}^*\mapsto \mathfrak{h}^*, \qquad \langle \mathfrak{b}_\xi^*\mu,\eta \rangle= \langle \mu, \mathfrak{b}_\xi \eta \rangle = \langle \mu, \eta\vartriangleright \xi  \rangle.
\end{equation} 

Similarly, recall the right action in $\vartriangleleft$ and freeze $\xi$ in $\mathfrak{g}$. This reads a linear mapping, denoted by $\vartriangleleft\xi $, on the Lie algebra $\mathfrak{h}$. The dual of this mapping is 
\begin{equation} \label{xi-star}
\xi \overset{\ast }{\vartriangleright}: \mathfrak{h}^* \mapsto \mathfrak{h}^*, \qquad 
\langle \xi \overset{\ast }{\vartriangleright}\nu, \eta \rangle 
=\langle\nu,  \eta \vartriangleleft\xi\rangle.
\end{equation}
This is the  left representation of $\mathfrak{h}$ on $\mathfrak{g}^*$. Now, we freeze an element, say $\eta$ in $\mathfrak{h}$, in the right action $\vartriangleleft$. This enables us to define a linear mapping $\mathfrak{a}_\eta$ from $\mathfrak{g}$ to $\mathfrak{h}$ that is 
\begin{equation} \label{a}
\mathfrak{a}_\eta:\mathfrak{g}\mapsto \mathfrak{h}, \qquad \mathfrak{a}_\eta(\xi)=\eta\vartriangleleft \xi
\end{equation}
along with the dual mapping
\begin{equation} \label{a*}
\mathfrak{a}_\eta^*:\mathfrak{h}^*\mapsto \mathfrak{g}^*, \qquad \langle \mathfrak{a}_\eta^* \nu,\eta \rangle =
\langle \nu,\mathfrak{a}_\eta \xi \rangle=\langle \nu,\eta\vartriangleleft \xi \rangle.
\end{equation}

\begin{proposition} \label{ad-*-prop}
The infinitesimal coadjoint
action $\ad^{\ast}$ of an element $(\xi\oplus\eta)$ in $\mathfrak{g}\bowtie \mathfrak{h}$ onto an element $(\mu\oplus\nu)$ in the dual space $(\mathfrak{g}^*\oplus \mathfrak{h})^{\ast}$ is computed to be
\begin{equation}  \label{ad-*}
\ad_{(\xi\oplus\eta)}^{\ast}(\mu\oplus\nu)=\underbrace{ \big(ad^{\ast}_{\xi} \mu -\mu \overset{\ast }{%
	\vartriangleleft}\eta - \mathfrak{a}_{\eta}^{\ast}\nu\big)}_{\in ~ \mathfrak{g}^*}\oplus \underbrace{ \big(
ad^{\ast}_{\eta} \nu +\xi \overset{\ast }{\vartriangleright}\nu+ \mathfrak{b}%
_{\xi}^{\ast}\mu\big )}_{\in ~ \mathfrak{h}^*}.
\end{equation}
Here, (the italic) $ad^{\ast}$ represents the infinitesimal coadjoint actions of Lie subalgebras to their duals.
\end{proposition}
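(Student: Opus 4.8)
The plan is to unwind the definition of the coadjoint action as minus the algebraic dual of the adjoint action, substitute the matched pair bracket \eqref{mpla}, and then identify each resulting term against the four dual maps introduced in \eqref{eta-star}, \eqref{b*}, \eqref{xi-star} and \eqref{a*}. Concretely, pairing $\ad^*_{(\xi\oplus\eta)}(\mu\oplus\nu)$ with an arbitrary test element $\tilde\xi\oplus\tilde\eta$ gives, by the very definition of $\ad^{\ast}$,
\begin{equation*}
\langle \ad^*_{(\xi\oplus\eta)}(\mu\oplus\nu),\, \tilde\xi\oplus\tilde\eta\rangle = -\langle \mu\oplus\nu,\, [\xi\oplus\eta,\, \tilde\xi\oplus\tilde\eta]\rangle,
\end{equation*}
where the right-hand pairing is the natural one, $\langle \mu\oplus\nu,\, \tilde\xi\oplus\tilde\eta\rangle = \langle\mu,\tilde\xi\rangle + \langle\nu,\tilde\eta\rangle$ on $(\mathfrak{g}\oplus\mathfrak{h})^{\ast}=\mathfrak{g}^{\ast}\oplus\mathfrak{h}^{\ast}$.

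First I would insert \eqref{mpla}, so the right-hand side splits into six pairings: $-\langle\mu,[\xi,\tilde\xi]\rangle$, $-\langle\mu,\eta\vartriangleright\tilde\xi\rangle$, $+\langle\mu,\tilde\eta\vartriangleright\xi\rangle$ coming from the $\mathfrak{g}$-component paired with $\mu$, and $-\langle\nu,[\eta,\tilde\eta]\rangle$, $-\langle\nu,\eta\vartriangleleft\tilde\xi\rangle$, $+\langle\nu,\tilde\eta\vartriangleleft\xi\rangle$ coming from the $\mathfrak{h}$-component paired with $\nu$. Each term is then rewritten using exactly one definition: the first and fourth become $\langle ad^*_\xi\mu,\tilde\xi\rangle$ and $\langle ad^*_\eta\nu,\tilde\eta\rangle$ by the defining property of the subalgebra coadjoint actions; the second becomes $-\langle \mu\overset{\ast}{\vartriangleleft}\eta,\tilde\xi\rangle$ via \eqref{eta-star}; the third becomes $+\langle\mathfrak{b}_\xi^*\mu,\tilde\eta\rangle$ via \eqref{b*}; the fifth becomes $-\langle\mathfrak{a}_\eta^*\nu,\tilde\xi\rangle$ via \eqref{a*}; and the sixth becomes $+\langle\xi\overset{\ast}{\vartriangleright}\nu,\tilde\eta\rangle$ via \eqref{xi-star}.

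Once every term carries either $\tilde\xi$ or $\tilde\eta$ in its second slot, I would group the $\tilde\xi$-terms and the $\tilde\eta$-terms separately. The former collect into $\langle ad^*_\xi\mu - \mu\overset{\ast}{\vartriangleleft}\eta - \mathfrak{a}_\eta^*\nu,\, \tilde\xi\rangle$ and the latter into $\langle ad^*_\eta\nu + \xi\overset{\ast}{\vartriangleright}\nu + \mathfrak{b}_\xi^*\mu,\, \tilde\eta\rangle$. Because $\tilde\xi$ and $\tilde\eta$ range independently over $\mathfrak{g}$ and $\mathfrak{h}$, the $\mathfrak{g}^{\ast}$- and $\mathfrak{h}^{\ast}$-components of $\ad^*_{(\xi\oplus\eta)}(\mu\oplus\nu)$ can be read off directly, which is precisely \eqref{ad-*}.

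The calculation is essentially bookkeeping, so the only genuine care lies in the signs and, more subtly, in tracking which dual factor each mixed term lands in. The cross action $\tilde\eta\vartriangleright\xi$ is an element of $\mathfrak{g}$ and is tested against $\mu$, yet it feeds the $\mathfrak{h}^{\ast}$ slot through $\mathfrak{b}_\xi^*$; conversely $\eta\vartriangleleft\tilde\xi$ is an element of $\mathfrak{h}$ tested against $\nu$, yet it feeds the $\mathfrak{g}^{\ast}$ slot through $\mathfrak{a}_\eta^*$. Keeping these two ``crossings'' straight, together with the overall minus sign from the definition of $\ad^{\ast}$, is where an error would most easily creep in; everything else follows mechanically from the defining relations of the four dual maps.
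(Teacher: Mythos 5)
Your argument is correct: substituting the matched pair bracket \eqref{mpla} into $\langle \ad^*_{(\xi\oplus\eta)}(\mu\oplus\nu),\tilde\xi\oplus\tilde\eta\rangle=-\langle\mu\oplus\nu,[\xi\oplus\eta,\tilde\xi\oplus\tilde\eta]\rangle$ produces exactly the six pairings you list, each identification with \eqref{eta-star}, \eqref{b*}, \eqref{xi-star}, \eqref{a*} is the right one, the signs work out, and the two ``crossings'' you flag (the $\tilde\eta\vartriangleright\xi$ term feeding $\mathfrak{h}^*$ through $\mathfrak{b}_\xi^*$, and the $\eta\vartriangleleft\tilde\xi$ term feeding $\mathfrak{g}^*$ through $\mathfrak{a}_\eta^*$) are precisely the places where care is needed. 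However, this is the route the paper explicitly mentions and then declines to write out: the authors note that ``one way, and probably the easiest one'' is the direct definition plus the bracket \eqref{mpla}, but they instead expand $\ad^*_{\xi+\eta}(\mu+\nu)$ by bilinearity into the four pieces $\ad^*_\xi\mu$, $\ad^*_\xi\nu$, $\ad^*_\eta\mu$, $\ad^*_\eta\nu$ and compute the $\mathfrak{g}^*$- and $\mathfrak{h}^*$-projections of each separately (equations \eqref{result-1}--\eqref{result-4}). What their longer decomposition buys is an explicit dictionary between the coadjoint action $\ad^*$ on the matched pair and the subalgebra-level actions $ad^*$ --- for instance, that $ad^*_\xi\mu$ is the restriction of $\ad^*_\xi\mu$ to $\mathfrak{g}^*$ while $\mathfrak{b}^*_\xi\mu$ is its $\mathfrak{h}^*$-component --- which the paper reuses later when tracking how these individual pieces behave under the Lie algebra homomorphisms of Lemmas \ref{b-a-*} and \ref{dual-b-a-} and Proposition \ref{phi-prop}. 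Your version is shorter and establishes the formula just as rigorously; it simply does not isolate those intermediate identities.
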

\begin{proof}
One way, and probably the easiest one, to prove this proposition is to use  the direct definition of the coadjoint action. Then by employing the matched Lie algebra bracket \eqref{mpla} in this definition, the result follows from a direct calculation \cite{EsSu16}. Instead of presenting this calculation, we prefer to prove the proposition in a different, although a bit longer way, since we found this alternative proof more appropriate for  future aspects of the present paper. One advantage of this proof is to derive the distinction between two notations $\ad_\xi \mu$ (coadjoint action on the matched pair) and $ad_\xi \mu$ (coadjoint action on the Lie subalgebra level). In accordance with this, we start by rewriting the left hand side of \eqref{ad-*} in the following form
\begin{equation} \label{ad*-1-}
\ad_{\xi+\eta}^{\ast}(\mu+\nu)=\ad_{\xi}^{\ast}\mu+\ad_{\xi}^{\ast}\nu+\ad_{\eta}^{\ast}\mu+\ad_{\eta}^{\ast}\nu,
\end{equation}
where, for example, $\xi$ denotes the element $\xi+0$ in $\mathfrak{g}\bowtie \mathfrak{h}$. 
We compute all these terms one by one in respective order to arrive at the right hand side of \eqref{ad-*}. Notice that all four terms on the right hand side are the coadjoint actions of $\mathfrak{g}\bowtie \mathfrak{h}$ on its dual space $(\mathfrak{g}\bowtie \mathfrak{h})^{\ast}$. So that we should couple these terms with  generic Lie algebra elements, say $\tilde{\xi}+\tilde{\eta}$, in $\mathfrak{g}\bowtie \mathfrak{h}$. For the first term in the right hand side of \eqref{ad*-1-}, we compute 
\begin{equation}
\begin{split}
\langle \ad_{\xi}^{\ast}\mu, \tilde{\xi}+\tilde{\eta} \rangle 
&=
\langle \mu, [\tilde{\xi}+\tilde{\eta},\xi] \rangle 
=
\langle \mu, [\tilde{\xi},\xi] \rangle +\langle \mu, [\tilde{\eta},\xi] \rangle
\\
&= \langle  ad_{\xi}^{\ast}\mu, \tilde{\xi} \rangle 
+
\langle \mu, \tilde{\eta}\vartriangleright \xi \rangle 
+
\langle \mu, \tilde{\eta}\vartriangleleft \xi\rangle
\\
&=\langle  ad_{\xi}^{\ast}\mu, \tilde{\xi} \rangle 
+0+
\langle \mathfrak{b}^*_\xi \mu ,\tilde{\eta} \rangle,
\end{split}
  \end{equation}
  where we have employed \eqref{mab-defn} in the bracket $[\tilde{\eta},\xi]$ in the second line. 
This computation shows that the projection of $\ad_{\xi}^{\ast}\mu$ on the dual space $\mathfrak{g}^*$ is $ad_{\xi}^{\ast}\mu$ whereas the projection of $\ad_{\xi}^{\ast}\mu$ on the dual space $\mathfrak{h}^*$ is $\mathfrak{b}^*_\xi \mu $. More formally, we write this as 
\begin{equation} \label{result-1}
\ad_{\xi}^{\ast}\mu=\big(ad_{\xi}^{\ast}\mu \oplus \mathfrak{b}^*_\xi \mu \big) \in \mathfrak{g}^*\oplus \mathfrak{h}^*.
\end{equation} 
In other words, we conclude that $ad_{\xi}^{\ast}\mu$ is the restriction of $\ad_{\xi}^{\ast}\mu$ to $\mathfrak{g}^*$. Next, we study the second term on the right hand side of \eqref{ad*-1-}. Accordingly, for arbitrary $\tilde{\xi}+\tilde{\eta}$ in $\mathfrak{g}\bowtie \mathfrak{h}$ we have 
\begin{equation}
\begin{split}
\langle \ad_{\xi}^{\ast}\nu, \tilde{\xi}+\tilde{\eta} \rangle 
&=
\langle \nu, [\tilde{\xi}+\tilde{\eta},\xi] \rangle 
=
\langle \nu, [\tilde{\xi},\xi] \rangle +\langle \nu, [\tilde{\eta},\xi] \rangle
\\
&= \langle \nu, [\tilde{\xi},\xi] \rangle + \langle \nu, \tilde{\eta}\vartriangleright \xi \rangle  + \langle \nu, \tilde{\eta}\vartriangleleft \xi\rangle \\&= 0+0+\langle \xi \overset{\ast }{\vartriangleright} \nu, \tilde{\eta} \rangle.
\end{split}
  \end{equation}
The first and the second term in the second line are zero since all the possible pairings between $\mathfrak{h}^*$ and $\mathfrak{g}$ vanish. Therefore, we obtain that  
  \begin{equation}\label{result-2}
  \ad_{\xi}^{\ast}\nu= \big(0\oplus \xi \overset{\ast }{\vartriangleright} \nu\big)\in \mathfrak{g}^*\oplus \mathfrak{h}^*.
\end{equation}   
For the third term on the right hand side of \eqref{ad*-1-} we compute 
\begin{equation}
\begin{split}
\langle \ad_{\eta}^{\ast}\mu, \tilde{\xi}+\tilde{\eta} \rangle 
&=
\langle \mu, [\tilde{\xi}+\tilde{\eta},\eta] \rangle 
=
\langle \mu, [\tilde{\xi},\eta] \rangle +\langle \mu, [\tilde{\eta},\eta] \rangle
\\
&= -\langle \mu, \eta\vartriangleright\tilde{\xi} \rangle  
-\langle \mu, \eta\vartriangleleft\tilde{\xi} \rangle  + \langle \mu, [\tilde{\eta},\eta] \rangle \\ &= - \langle \mu  \overset{\ast }{\vartriangleleft} \eta \rangle-0+0.
\end{split}
\end{equation}
So that, we record this as 
\begin{equation} \label{result-3}
\ad_{\eta}^{\ast}\mu=\big (- \mu  \overset{\ast }{\vartriangleleft} \eta\oplus 0\big)\in \mathfrak{g}^*\oplus \mathfrak{h}^*.
\end{equation}
Finally, we work on the fourth term on the right hand side of \eqref{ad*-1-}. We have
\begin{equation}
\begin{split}
\langle \ad_{\eta}^{\ast}\nu, \tilde{\xi}+\tilde{\eta} \rangle 
&=
\langle \nu, [\tilde{\xi}+\tilde{\eta},\eta] \rangle 
=
\langle \nu, [\tilde{\xi},\eta] \rangle +\langle \nu, [\tilde{\eta},\eta] \rangle
\\
&= -\langle \nu, \eta\vartriangleright\tilde{\xi} \rangle  
-\langle \nu, \eta\vartriangleleft\tilde{\xi} \rangle  + \langle \nu, [\tilde{\eta},\eta] \rangle \\&= -0- \langle \mathfrak{a}^*_\eta\nu, \tilde{\xi} \rangle +\langle ad_\eta\nu,\tilde{\eta}\rangle,
\end{split}
\end{equation}
which means that
\begin{equation}\label{result-4}
\ad_{\eta}^{\ast}\nu=\big(-\mathfrak{a}^*_\eta\nu \oplus ad_\eta\nu \big)\in \mathfrak{g}^*\oplus \mathfrak{h}^*.
\end{equation}
Eventually, by adding all the results in \eqref{result-1}, \eqref{result-2}, \eqref{result-3} and \eqref{result-4} we arrive at the right hand side of \eqref{ad-*}. 
\end{proof}  

\subsection{Lie Algebra Homomorphisms}~
\begin{lemma}\label{mp-homo}
Given two matched pair Lie algebras $\mathfrak{K}_1=\mathfrak{g}_1\bowtie\mathfrak{h}_1$ and $\mathfrak{K}_2=\mathfrak{g}_2\bowtie\mathfrak{h}_2$, a linear map $\varphi:\mathfrak{K}_1\mapsto \mathfrak{K}_2$ satisfying $\varphi(\mathfrak{g}_1)\subseteq \mathfrak{g}_2$ and $\varphi(\mathfrak{h}_1)\subseteq \mathfrak{h}_2$ is a Lie algebra homomorphism, if and only if 
\begin{equation} \label{Lie-hom-eq}
\varphi(\eta\vartriangleright \xi)=\varphi(\eta) \vartriangleright \varphi(\xi) \qquad 
\varphi(\eta\vartriangleleft \xi)=\varphi(\eta) \vartriangleleft \varphi(\xi)
\end{equation}
for any $\xi\in \mathfrak{g}_1$ and any $\eta\in \mathfrak{h}_1$.
\end{lemma}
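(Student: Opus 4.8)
The plan is to reduce the assertion ``$\varphi$ is a Lie-algebra homomorphism'' to a bracket check on the pure generators of $\mathfrak{K}_1=\mathfrak{g}_1\oplus\mathfrak{h}_1$ and then to isolate the content that genuinely involves the mutual actions. Because $\varphi$ and both matched-pair brackets are bilinear, $\varphi$ preserves brackets everywhere if and only if it does so on elements of the form $\xi\oplus 0$ (with $\xi\in\mathfrak{g}_1$) and $0\oplus\eta$ (with $\eta\in\mathfrak{h}_1$); by antisymmetry of the bracket this amounts to inspecting only three pairings, namely $\mathfrak{g}_1\times\mathfrak{g}_1$, $\mathfrak{h}_1\times\mathfrak{h}_1$, and the mixed $\mathfrak{h}_1\times\mathfrak{g}_1$.

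The computational heart is the specialization of the matched-pair bracket \eqref{mpla} to one pure $\mathfrak{h}_1$-element and one pure $\mathfrak{g}_1$-element: taking the first argument $0\oplus\eta$ and the second $\xi\oplus 0$ kills every intrinsic-bracket term and leaves $[0\oplus\eta,\,\xi\oplus 0]=(\eta\vartriangleright\xi)\oplus(\eta\vartriangleleft\xi)$, which is exactly \eqref{mab-defn}. For the forward implication I would assume $\varphi$ is a homomorphism and expand $\varphi([0\oplus\eta,\,\xi\oplus 0])=[\varphi(0\oplus\eta),\,\varphi(\xi\oplus 0)]$. Using $\varphi(\mathfrak{g}_1)\subseteq\mathfrak{g}_2$ and $\varphi(\mathfrak{h}_1)\subseteq\mathfrak{h}_2$, the left side splits cleanly as $\varphi(\eta\vartriangleright\xi)\oplus\varphi(\eta\vartriangleleft\xi)$, with the first summand lying in $\mathfrak{g}_2$ and the second in $\mathfrak{h}_2$; since $\varphi(\xi)\in\mathfrak{g}_2$ and $\varphi(\eta)\in\mathfrak{h}_2$, the right side is the same specialization of \eqref{mpla} computed inside $\mathfrak{K}_2$, namely $(\varphi(\eta)\vartriangleright\varphi(\xi))\oplus(\varphi(\eta)\vartriangleleft\varphi(\xi))$. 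Matching the $\mathfrak{g}_2$- and $\mathfrak{h}_2$-components then yields the two identities of \eqref{Lie-hom-eq}. The converse is this computation run backwards: granting \eqref{Lie-hom-eq}, reassembling the two matched components shows that $\varphi$ preserves the mixed bracket.

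The step I expect to require the most care is the bookkeeping for the two diagonal pairings. On $\mathfrak{g}_1\times\mathfrak{g}_1$ the bracket \eqref{mpla} collapses to the intrinsic bracket of $\mathfrak{g}_1$, so preservation there is precisely the statement that $\varphi|_{\mathfrak{g}_1}\colon\mathfrak{g}_1\to\mathfrak{g}_2$ is a Lie-algebra homomorphism, and similarly $\mathfrak{h}_1\times\mathfrak{h}_1$ forces $\varphi|_{\mathfrak{h}_1}$ to be one. These two requirements are not consequences of \eqref{Lie-hom-eq} (for trivial actions \eqref{Lie-hom-eq} is vacuous while $\varphi$ may still fail to respect the subalgebra brackets), so I would make explicit at the outset that the hypotheses $\varphi(\mathfrak{g}_1)\subseteq\mathfrak{g}_2$ and $\varphi(\mathfrak{h}_1)\subseteq\mathfrak{h}_2$ are to be read as $\varphi$ restricting to Lie-subalgebra homomorphisms. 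Relative to that standing assumption the diagonal pairings are automatic, and \eqref{Lie-hom-eq} supplies exactly the remaining cross-term compatibility, which is what makes the equivalence hold.
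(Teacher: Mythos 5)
Your argument is correct, and since the paper states Lemma \ref{mp-homo} without supplying any proof, there is nothing to compare it against; yours is the natural argument (reduce by bilinearity and antisymmetry to the three pairings of pure generators, observe that the mixed pairing of \eqref{mpla} collapses to \eqref{mab-defn}, and match $\mathfrak{g}_2$- and $\mathfrak{h}_2$-components using the hypotheses $\varphi(\mathfrak{g}_1)\subseteq\mathfrak{g}_2$, $\varphi(\mathfrak{h}_1)\subseteq\mathfrak{h}_2$).

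Your closing observation is the most valuable part and deserves emphasis: as literally stated, the ``if'' direction of the lemma is false. The conditions \eqref{Lie-hom-eq} constrain only the cross terms, so for trivial mutual actions they are vacuous, yet a linear map preserving the two summands need not respect the intrinsic brackets of $\mathfrak{g}_1$ and $\mathfrak{h}_1$ (take $\mathfrak{h}_1=\mathfrak{h}_2=0$ and any non-homomorphic linear endomorphism of a nonabelian $\mathfrak{g}_1$). Your repair --- reading the hypotheses as requiring $\varphi|_{\mathfrak{g}_1}$ and $\varphi|_{\mathfrak{h}_1}$ to be Lie-algebra homomorphisms onto their images, so that the diagonal pairings are automatic and \eqref{Lie-hom-eq} supplies exactly the remaining cross-term compatibility --- is the right one, and it is consistent with how the lemma is actually used later in the paper (e.g.\ for {\small GCCL}, whose restrictions to $\mathfrak{s}$ and $\mathfrak{n}$ are subalgebra homomorphisms by construction). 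I would state that amended hypothesis explicitly rather than leave it implicit.
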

Assume that $\varphi$ is a homomorphism from a matched pair $\mathfrak{K}_1=\mathfrak{g}_1\bowtie\mathfrak{h}_1$ to another matched pair $\mathfrak{K}_2=\mathfrak{g}_2\bowtie\mathfrak{h}_2$. Let us also assume that this homomorphism respects the matched pair decompositions, that is, Lemma \ref{mp-homo} is assumed. Then a straightforward calculation shows that 
\begin{equation} \label{dual-inc}
\varphi^*(\mathfrak{g}_2^*)\subset \mathfrak{g}_1^*, \qquad \varphi^*(\mathfrak{h}_2^*)\subset \mathfrak{h}_1^*,
\end{equation}
where $\varphi^*$ is the dual operation. In the following two lemmas we exhibit commutation rules of  the dual and the cross actions with the dual mapping.
\begin{lemma}\label{b-a-*}
Assume a Lie algebra $\varphi$ as described in Lemma \ref{mp-homo}. Then the commutation rules  
\begin{equation} \label{dual-b-a}
\varphi^*\circ \mathfrak{b}^*_{\varphi(\xi)}=\mathfrak{b}^*_{\xi}\circ \varphi^*, \qquad \varphi^*\circ \mathfrak{a}^*_{\varphi(\eta)}=\mathfrak{a}^*_{\eta}\circ \varphi^*
\end{equation}
hold for the dual mapping   $\varphi^*$ and the cross actions $\mathfrak{b}^*$ in \eqref{b} and $\mathfrak{a}^*$ in \eqref{a*}, respectively. 
\end{lemma}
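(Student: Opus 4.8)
The plan is to read both identities in \eqref{dual-b-a} as equalities of linear maps between dual spaces and to verify them by evaluating each side on a generic vector through the canonical pairing. Before doing so, I would record the source and target of every map so that the two composites actually share matching domains and codomains. For the first identity, since $\xi\in\mathfrak{g}_1$ forces $\varphi(\xi)\in\mathfrak{g}_2$, the map $\mathfrak{b}^*_{\varphi(\xi)}$ of \eqref{b*} sends $\mathfrak{g}_2^*$ to $\mathfrak{h}_2^*$, and by the inclusion \eqref{dual-inc} the dual map $\varphi^*$ then carries $\mathfrak{h}_2^*$ into $\mathfrak{h}_1^*$; on the other side $\varphi^*$ first carries $\mathfrak{g}_2^*$ into $\mathfrak{g}_1^*$, on which $\mathfrak{b}^*_\xi$ acts into $\mathfrak{h}_1^*$. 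Thus both composites are maps $\mathfrak{g}_2^*\to\mathfrak{h}_1^*$, and it suffices to pair their values with an arbitrary $\eta\in\mathfrak{h}_1$.

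For the left-hand composite I would take $\mu\in\mathfrak{g}_2^*$ and compute, using the defining relation \eqref{b*} for $\mathfrak{b}^*$ and the definition of the dual map,
\begin{equation*}
\langle \varphi^*\mathfrak{b}^*_{\varphi(\xi)}\mu,\,\eta\rangle = \langle \mathfrak{b}^*_{\varphi(\xi)}\mu,\,\varphi(\eta)\rangle = \langle \mu,\,\varphi(\eta)\vartriangleright\varphi(\xi)\rangle.
\end{equation*}
At this point the homomorphism hypothesis enters: by the first relation in \eqref{Lie-hom-eq} one has $\varphi(\eta)\vartriangleright\varphi(\xi)=\varphi(\eta\vartriangleright\xi)$, so the last expression equals $\langle\mu,\varphi(\eta\vartriangleright\xi)\rangle=\langle\varphi^*\mu,\eta\vartriangleright\xi\rangle$. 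Unfolding the right-hand composite by \eqref{b*} gives $\langle\mathfrak{b}^*_\xi\varphi^*\mu,\eta\rangle=\langle\varphi^*\mu,\eta\vartriangleright\xi\rangle$, which matches. Since $\eta$ is arbitrary, the first identity in \eqref{dual-b-a} follows.

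The second identity I would prove by the mirror-image computation, now pairing with a generic $\xi\in\mathfrak{g}_1$ and feeding $\nu\in\mathfrak{h}_2^*$ through $\mathfrak{a}^*_{\varphi(\eta)}$; the defining relation \eqref{a*} together with the second homomorphism relation $\varphi(\eta)\vartriangleleft\varphi(\xi)=\varphi(\eta\vartriangleleft\xi)$ collapses both sides to $\langle\varphi^*\nu,\eta\vartriangleleft\xi\rangle$. There is no genuine obstacle in either computation; the only place that demands care is the bookkeeping of which dual space each intermediate object inhabits, which is exactly what \eqref{dual-inc} guarantees, while the single nontrivial input is the compatibility of $\varphi$ with the cross actions supplied by \eqref{Lie-hom-eq}. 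I would therefore present the argument as a short pairing computation rather than anything more elaborate.
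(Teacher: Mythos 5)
Your proof is correct and follows essentially the same route as the paper's: both verify the identities by a pairing computation that unfolds the definitions of $\mathfrak{b}^*$, $\mathfrak{a}^*$ and the dual map $\varphi^*$, with the homomorphism compatibility \eqref{Lie-hom-eq} as the single nontrivial input. Your explicit bookkeeping of domains and codomains via \eqref{dual-inc} is a welcome clarification but does not change the substance of the argument.
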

\begin{proof}
We start with pairing of $\varphi(\eta \vartriangleright \xi)$ with an arbitrary element $\tilde{\mu}$ in $\mathfrak{g}^*_2$, i.e.
\begin{equation} \label{calc--1}
\langle \varphi(\eta \vartriangleright \xi),\tilde{\mu}\rangle
= \langle \varphi(\eta) \vartriangleright \varphi(\xi),\tilde{\mu}\rangle = \langle \varphi(\eta),\mathfrak{b}_{\varphi(\xi)}\tilde{\mu}\rangle = \langle \eta,\varphi^*\circ \mathfrak{b}_{\varphi(\xi)}\tilde{\mu}\rangle.
\end{equation}
Notice that we have employed identity (\ref{Lie-hom-eq}) in the first equality and then we have used the definition of $\mathfrak{b}^*$ from \eqref{b*} in the second equality. On the other hand, we have that
\begin{equation} \label{calc--2}
\langle \varphi(\eta \vartriangleright \xi),\tilde{\mu}\rangle
= \langle \eta \vartriangleright \xi,\varphi^*(\tilde{\mu})\rangle
=\langle \eta ,\mathfrak{b}_\xi \circ \varphi^*(\tilde{\mu})\rangle.
\end{equation}
Comparing the calculations in \eqref{calc--1} and \eqref{calc--2}, we arrive at the first identity in \eqref{dual-b-a} for an arbitrary $\tilde{\mu}$. For the second identity, start with pairing $\varphi(\eta \vartriangleleft \xi)$ with an arbitrary element $\tilde{\nu}$ in $\mathfrak{h}^*_2$. We have that 
\begin{equation} \label{calc--3}
\langle \varphi(\eta \vartriangleleft \xi),\tilde{\nu}\rangle
= \langle \varphi(\eta) \vartriangleleft \varphi(\xi),\tilde{\nu}\rangle = \langle \varphi(\xi),\mathfrak{a}_{\varphi(\eta)}\tilde{\nu}\rangle = \langle \xi,\varphi^*\circ \mathfrak{a}_{\varphi(\eta)}\tilde{\nu}\rangle,
\end{equation}
where identity (\ref{Lie-hom-eq}) is used in the first equality, and definition \eqref{a*} is employed in the second equality. On the other hand, one has
\begin{equation} \label{calc--4}
\langle \varphi(\eta \vartriangleleft \xi),\tilde{\nu}\rangle
= \langle \eta \vartriangleleft \xi,\varphi^*(\tilde{\nu})\rangle
=\langle \xi ,\mathfrak{a}_\eta \circ \varphi^*(\tilde{\nu})\rangle.
\end{equation}
A comparison between  \eqref{calc--3} and \eqref{calc--4} for an arbitrary $\tilde{\nu}$ results in the second identity in \eqref{dual-b-a}.
\end{proof}

\begin{lemma} \label{dual-b-a-}
Assume a Lie algebra $\varphi$ as described in Lemma \ref{mp-homo}. For the dual mapping   $\varphi^*$ and the dual actions $\overset{\ast }{\vartriangleleft}$ in \eqref{eta-star} and $\overset{\ast}{\vartriangleright}$ in \eqref{xi-star}, 
the following commutation rules hold 
\begin{equation} \label{dual-triangle-eta-xi}
(\varphi^*\tilde{\mu}) \overset{\ast }{\vartriangleleft} \eta =\varphi^*\big(\tilde{\mu} \overset{\ast }{\vartriangleleft} \varphi(\eta)\big), 
\qquad 
\xi \overset{\ast }{\vartriangleright}\varphi^* (\tilde{\nu})=\varphi^*(\varphi(\xi)\overset{\ast }{\vartriangleright} \tilde{\nu}\big)
\end{equation}
for arbitrary $\tilde{\mu}$ in $\mathfrak{g}_2^*$ and $\tilde{\nu}$ in $\mathfrak{h}_2^*$, respectively. 
\end{lemma}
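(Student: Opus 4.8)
The plan is to establish both commutation rules by the same transpose-and-test argument that proves Lemma \ref{b-a-*}: each asserted identity is an equality between two functionals, so it suffices to pair both sides with a generic algebra element and check that the resulting scalars coincide. The only ingredients needed are the defining relations \eqref{eta-star} and \eqref{xi-star} of the dual actions, the defining relation $\langle \varphi^*\tilde{\mu},\xi\rangle=\langle\tilde{\mu},\varphi(\xi)\rangle$ of the transpose map, and the homomorphism identities \eqref{Lie-hom-eq}. Before starting, I note that the inclusions \eqref{dual-inc} make both statements well posed: in the first identity $\varphi^*\tilde{\mu}\in\mathfrak{g}_1^*$, so $(\varphi^*\tilde{\mu})\overset{\ast }{\vartriangleleft}\eta\in\mathfrak{g}_1^*$, while $\tilde{\mu}\overset{\ast }{\vartriangleleft}\varphi(\eta)\in\mathfrak{g}_2^*$ is pulled back into $\mathfrak{g}_1^*$ by $\varphi^*$; similarly both sides of the second identity land in $\mathfrak{h}_1^*$. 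Thus each identity reduces to comparing two elements of the same dual space.

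For the first identity I would test against an arbitrary $\xi\in\mathfrak{g}_1$. Expanding the left-hand side gives
\[
\langle (\varphi^*\tilde{\mu})\overset{\ast }{\vartriangleleft}\eta,\xi\rangle
=\langle \varphi^*\tilde{\mu},\eta\vartriangleright\xi\rangle
=\langle \tilde{\mu},\varphi(\eta\vartriangleright\xi)\rangle
=\langle \tilde{\mu},\varphi(\eta)\vartriangleright\varphi(\xi)\rangle,
\]
where I use \eqref{eta-star}, then the definition of $\varphi^*$, then the first relation in \eqref{Lie-hom-eq}. The right-hand side unfolds in the reverse order,
\[
\langle \varphi^*\big(\tilde{\mu}\overset{\ast }{\vartriangleleft}\varphi(\eta)\big),\xi\rangle
=\langle \tilde{\mu}\overset{\ast }{\vartriangleleft}\varphi(\eta),\varphi(\xi)\rangle
=\langle \tilde{\mu},\varphi(\eta)\vartriangleright\varphi(\xi)\rangle,
\]
by the definition of $\varphi^*$ followed by \eqref{eta-star}. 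Since the two scalars agree for every $\xi\in\mathfrak{g}_1$, the first identity in \eqref{dual-triangle-eta-xi} follows.

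The second identity is entirely parallel: I would test against an arbitrary $\eta\in\mathfrak{h}_1$ and replace \eqref{eta-star} by \eqref{xi-star}, using the second relation in \eqref{Lie-hom-eq}. Both $\langle \xi\overset{\ast }{\vartriangleright}\varphi^*\tilde{\nu},\eta\rangle$ and $\langle \varphi^*\big(\varphi(\xi)\overset{\ast }{\vartriangleright}\tilde{\nu}\big),\eta\rangle$ collapse to $\langle\tilde{\nu},\varphi(\eta)\vartriangleleft\varphi(\xi)\rangle$, which yields the claim. There is no genuine conceptual obstacle here; the argument is pure bookkeeping. The only point requiring care — and hence what I regard as the main (and essentially sole) difficulty — is tracking which of $\mathfrak{g}_1,\mathfrak{g}_2,\mathfrak{h}_1,\mathfrak{h}_2$ and their duals each symbol inhabits, so that the correct representation is applied at each step; the inclusions \eqref{dual-inc} are exactly what guarantee that every intermediate expression remains in its intended space.
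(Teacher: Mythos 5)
Your proof is correct and is essentially the same argument as the paper's: both rely on pairing with a generic element of $\mathfrak{g}_1$ (resp.\ $\mathfrak{h}_1$) and chaining the definitions \eqref{eta-star}, \eqref{xi-star}, the transpose, and the homomorphism identities \eqref{Lie-hom-eq}; the paper merely organizes the computation by expanding $\langle\varphi(\eta\vartriangleright\xi),\tilde{\mu}\rangle$ in two ways rather than expanding each side of the claimed identity separately. No gaps.
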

\begin{proof}
We couple $\varphi(\eta \vartriangleright \xi)$ with an arbitrary element $\tilde{\mu}$ in $\mathfrak{g}^*_2$ that is,
\begin{equation} \label{calc--11}
\begin{split}
\langle \varphi(\eta \vartriangleright \xi),\tilde{\mu}\rangle
&=
\langle \varphi(\eta) \vartriangleright \varphi(\xi),\tilde{\mu}\rangle 
=
\langle \varphi(\xi), \tilde{\mu}\overset{\ast }{\vartriangleleft}\varphi(\eta) \rangle 
\\&=
\langle \xi,\varphi^*\big(\tilde{\mu}\overset{\ast }{\vartriangleleft}\varphi(\eta)\big) \rangle,
\end{split}
\end{equation}
where we have employed the identity (\ref{Lie-hom-eq}) in the first equality whereas we have used the definition of $\overset{\ast }{\vartriangleleft}$ in \eqref{eta-star} in the second equality. On the other hand, we have that
\begin{equation} \label{calc--22}
\langle \varphi(\eta \vartriangleright \xi),\tilde{\mu}\rangle
= \langle \eta \vartriangleright \xi,\varphi^*(\tilde{\mu})\rangle
=\langle \xi , \big(\varphi^*(\tilde{\mu})\big)\overset{\ast }{\vartriangleleft}\eta \rangle.
\end{equation}
Comparing the calculations in \eqref{calc--11} and \eqref{calc--22}, one arrives at the first identity in \eqref{dual-triangle-eta-xi} for an arbitrary $\tilde{\mu}$. For the second identity, start with pairing $\varphi(\eta \vartriangleleft \xi)$ with an arbitrary element $\tilde{\nu}$ in $\mathfrak{h}^*_2$. We have that 
\begin{equation} \label{calc--33}
\begin{split}
\langle \varphi(\eta \vartriangleleft \xi),\tilde{\nu}\rangle
&=
 \langle \varphi(\eta) \vartriangleleft \varphi(\xi),\tilde{\nu}\rangle 
=
 \langle \varphi(\eta),\varphi(\xi)\overset{\ast }{\vartriangleright} \tilde{\nu}\rangle 
 \\ &= 
 \langle \eta,\varphi^*\big(\varphi(\xi)\overset{\ast }{\vartriangleright} \tilde{\nu}\big)\rangle ,
 \end{split}
\end{equation}
where identity (\ref{Lie-hom-eq}) is used in the first equality, and the definition of $\overset{\ast }{\vartriangleright}$ in \eqref{xi-star} is employed in the second equality. Further, 
\begin{equation} \label{calc--44}
\langle \varphi(\eta \vartriangleleft \xi),\tilde{\nu}\rangle
= \langle \eta \vartriangleleft \xi,\varphi^*(\tilde{\nu})\rangle
=\langle \eta, \xi \overset{\ast }{\vartriangleright} \varphi^*(\tilde{\nu})\rangle.
\end{equation}
\eqref{calc--33} and \eqref{calc--44} for an arbitrary $\tilde{\nu}$ manifest the second identity in \eqref{dual-triangle-eta-xi}.
\end{proof}

Consider a Lie algebra homomorphism $\varphi$ from a Lie algebra $\mathfrak{K}_1$ to  $\mathfrak{K}_2$. The commutation rule between the dual mapping $\varphi^*$ and the coadjoint action $\ad^*$ is computed to be 
\begin{equation} \label{dual-coad}
\ad^*_x \circ ~\varphi ^*=\varphi^*\circ\ad^*_{\varphi(x)}
\end{equation}    
for all $x$ in $\mathfrak{K}_1$. Notice that the coadjoint action on the left hand side is the one on $\mathfrak{K}^*_2$ whereas the coadjoint action on the right hand side is the one on $\mathfrak{K}^*_1$. This reads that $\varphi^*$ is a Poisson mapping preserving the Lie-Poisson brackets and the coadjoint flows.  
Now we are ready to study matched Lie-Poisson dynamics under Lie algebra homomorphisms preserving the matched pair decompositions. 
\begin{proposition} \label{phi-prop}
Assume that $\varphi$ is a Lie algebra homomorphism from a matched pair Lie algebra $\mathfrak{K}_1=\mathfrak{g}_1 \bowtie \mathfrak{h}_1$ to a matched pair Lie algebra $\mathfrak{K}_2=\mathfrak{g}_2 \bowtie \mathfrak{h}_2$ respecting decompositions. Then we have the following commutation law for the coadjoint actions and the pull-back $\varphi^*$ 
\begin{equation} \label{coad-phi*}
 \ad_{\xi\oplus\eta}^{\ast} (\varphi^*\tilde{\mu}\oplus\varphi^*\tilde{\nu})=  \varphi^*\circ\ad_{\varphi(\xi)\oplus\varphi(\eta)}^{\ast} (\tilde{\mu}\oplus\tilde{\nu})
\end{equation}
for any $\xi\oplus \eta $ in $\mathfrak{g}_1\bowtie \mathfrak{h}_1$ and for any $\tilde{\mu}\oplus \tilde{\nu} $ in $\mathfrak{g}_2\bowtie \mathfrak{h}_2$. 
\end{proposition}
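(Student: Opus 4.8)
The plan is to reduce \eqref{coad-phi*} to the explicit coadjoint formula of Proposition~\ref{ad-*-prop} and then verify the identity termwise, exploiting that $\varphi^*$ preserves the dual decomposition. First I would record the key structural fact coming from \eqref{dual-inc}: since $\varphi^*(\mathfrak{g}_2^*)\subset\mathfrak{g}_1^*$ and $\varphi^*(\mathfrak{h}_2^*)\subset\mathfrak{h}_1^*$, the pull-back splits as $\varphi^*(\tilde{\mu}\oplus\tilde{\nu})=\varphi^*\tilde{\mu}\oplus\varphi^*\tilde{\nu}$. Consequently both sides of \eqref{coad-phi*} are genuine elements of $\mathfrak{g}_1^*\oplus\mathfrak{h}_1^*$, and it suffices to compare their $\mathfrak{g}_1^*$- and $\mathfrak{h}_1^*$-components separately.

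Next I would expand the left-hand side by applying \eqref{ad-*} with $\mu=\varphi^*\tilde{\mu}$ and $\nu=\varphi^*\tilde{\nu}$, obtaining the $\mathfrak{g}_1^*$-component $ad^{\ast}_{\xi}\varphi^*\tilde{\mu}-(\varphi^*\tilde{\mu})\overset{\ast }{\vartriangleleft}\eta-\mathfrak{a}^{\ast}_{\eta}\varphi^*\tilde{\nu}$ and the $\mathfrak{h}_1^*$-component $ad^{\ast}_{\eta}\varphi^*\tilde{\nu}+\xi\overset{\ast }{\vartriangleright}\varphi^*\tilde{\nu}+\mathfrak{b}^{\ast}_{\xi}\varphi^*\tilde{\mu}$. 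For the right-hand side I would first apply \eqref{ad-*} to $\ad^{\ast}_{\varphi(\xi)\oplus\varphi(\eta)}(\tilde{\mu}\oplus\tilde{\nu})$ and then push $\varphi^*$ through the direct sum, so that the $\mathfrak{g}_1^*$-component reads $\varphi^*(ad^{\ast}_{\varphi(\xi)}\tilde{\mu})-\varphi^*(\tilde{\mu}\overset{\ast }{\vartriangleleft}\varphi(\eta))-\varphi^*(\mathfrak{a}^{\ast}_{\varphi(\eta)}\tilde{\nu})$ and the $\mathfrak{h}_1^*$-component reads $\varphi^*(ad^{\ast}_{\varphi(\eta)}\tilde{\nu})+\varphi^*(\varphi(\xi)\overset{\ast }{\vartriangleright}\tilde{\nu})+\varphi^*(\mathfrak{b}^{\ast}_{\varphi(\xi)}\tilde{\mu})$.

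The two expansions then agree term by term. The $\mathfrak{a}^*$ and $\mathfrak{b}^*$ terms coincide by the commutation rules \eqref{dual-b-a} of Lemma~\ref{b-a-*}, while the $\overset{\ast }{\vartriangleleft}$ and $\overset{\ast }{\vartriangleright}$ terms coincide by the rules \eqref{dual-triangle-eta-xi} of Lemma~\ref{dual-b-a-}. The only terms not directly covered by the cited lemmas are the intrinsic coadjoint terms $ad^{\ast}_{\xi}\varphi^*\tilde{\mu}$ and $ad^{\ast}_{\eta}\varphi^*\tilde{\nu}$; for these I would invoke the subalgebra-level analogue of \eqref{dual-coad}. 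The point is that $\varphi$ restricts to Lie algebra homomorphisms $\mathfrak{g}_1\to\mathfrak{g}_2$ and $\mathfrak{h}_1\to\mathfrak{h}_2$, since by hypothesis $\varphi(\mathfrak{g}_1)\subseteq\mathfrak{g}_2$, $\varphi(\mathfrak{h}_1)\subseteq\mathfrak{h}_2$ and $\varphi$ preserves brackets; dualizing $\varphi([\xi,\tilde{\xi}])=[\varphi(\xi),\varphi(\tilde{\xi})]$ and its $\mathfrak{h}$-analogue yields $ad^{\ast}_{\xi}\circ\varphi^*=\varphi^*\circ ad^{\ast}_{\varphi(\xi)}$ and $ad^{\ast}_{\eta}\circ\varphi^*=\varphi^*\circ ad^{\ast}_{\varphi(\eta)}$. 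Matching all six terms then gives \eqref{coad-phi*}.

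I expect the main (and essentially only) obstacle to be bookkeeping rather than conceptual: one must keep straight the distinction, emphasized in Proposition~\ref{ad-*-prop}, between the intrinsic subalgebra coadjoint $ad^*$ and the matched-pair coadjoint $\ad^*$, and ensure that the $ad^*$ terms are handled by the restricted-homomorphism property while the four cross-action terms are handled by Lemmas~\ref{b-a-*} and~\ref{dual-b-a-}. Once that distinction is respected, the proof is a direct termwise assembly.
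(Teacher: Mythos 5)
Your proposal is correct and follows essentially the same route as the paper: expand both sides via the explicit coadjoint formula of Proposition~\ref{ad-*-prop}, match the four cross-action terms using Lemmas~\ref{b-a-*} and~\ref{dual-b-a-}, and handle the intrinsic $ad^*$ terms by the restricted-homomorphism naturality. If anything, you are slightly more careful than the paper in spelling out the justification of the subalgebra-level identity $ad^{\ast}_{\xi}\circ\varphi^*=\varphi^*\circ ad^{\ast}_{\varphi(\xi)}$, which the paper's displayed computation uses implicitly.
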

\begin{proof}
One way to prove the proposition above is to apply directly the inclusions in \eqref{dual-inc} to the identity  \eqref{dual-coad}. However, we prefer once more a bit  longer proof in order to identify how the terms in the coadjoint actions behave under the Poisson mapping. For this end, we start with the left hand side of (\ref{coad-phi*}) for the explicit expression of the coadjoint representation in \eqref{ad-*}. Accordingly, we compute
\begin{equation}
\begin{split}
 &\ad_{\xi\oplus\eta}^{\ast} (\varphi^*\tilde{\mu}\oplus\varphi^*\tilde{\nu})
 \\ & \quad =\big(ad^{\ast}_{\xi}\circ  \varphi^*\tilde{\mu} 
 -(\varphi^*\tilde{\mu}) \overset{\ast }{	\vartriangleleft}\eta - \mathfrak{a}_{\eta}^{\ast}\circ \varphi^*\tilde{\nu}\big) \\ &\hspace{3cm}\oplus \big(
ad^{\ast}_{\eta}\circ \varphi^*\tilde{\nu} +\xi \overset{\ast }{\vartriangleright}(\varphi^*\tilde{\nu})+ \mathfrak{b}_{\xi}^{\ast}\circ\varphi^*\tilde{\mu}\big )
\\
& \quad =\big(  \varphi^*\circ ad^{\ast}_{\varphi(\xi)}\tilde{\mu} 
 -\varphi^*\big(\tilde{\mu} \overset{\ast }{\vartriangleleft} \varphi(\eta)\big)
  - \varphi^*\circ \mathfrak{a}^*_{\varphi(\eta)} \tilde{\nu}\big)
 \\& \hspace{3cm} \oplus
   \big(\varphi^* \circ 
ad^{\ast}_{\varphi(\eta)} \tilde{\nu} 
+
\varphi^*(\varphi(\xi)\overset{\ast }{\vartriangleright} \tilde{\nu})
+ 
\varphi^*\circ \mathfrak{b}^*_{\varphi(\xi)} \tilde{\mu}\big )
\\
& \quad = \varphi^*
\big(  ad^{\ast}_{\varphi(\xi)}\tilde{\mu} 
 - \big(\tilde{\mu} \overset{\ast }{\vartriangleleft} \varphi(\eta)\big)
  -  \mathfrak{a}^*_{\varphi(\eta)} \tilde{\nu}\big)
  \\& \hspace{3cm} \oplus
   \big( 
ad^{\ast}_{\varphi(\eta)} \tilde{\nu} 
+
 (\varphi(\xi)\overset{\ast }{\vartriangleright} \tilde{\nu})
+ 
 \mathfrak{b}^*_{\varphi(\xi)} \tilde{\mu}\big )
 \\
 & \quad
 =\varphi^*\circ\ad_{\varphi(\xi)\oplus\varphi(\eta)}^{\ast} (\tilde{\mu}\oplus\tilde{\nu}),
\end{split}
\end{equation}
where we have employed the identities presented in Lemmas \ref{b-a-*} and \ref{dual-b-a-} in the second line of the calculation, and we have used the linearity of the dual mapping in the third line. 
\end{proof}

\subsection{Euler-Poincar\'{e} Equations}~

Consider a Lagrangian function $\mathfrak{L}$ defined on a Lie algebra $\mathfrak{g}$. To arrive at the equations of motion governed by $\mathfrak{L}$, one takes the variation of the action functional 
\begin{equation}
\delta\int_{a}^{b}\mathfrak{L}\left( \xi\right) dt=\int_{a}^{b}
\left\langle \frac{\delta\mathfrak{L}}{\delta\xi},\delta\xi\right\rangle
_{e}
 dt. 
\end{equation}
Applying the reduced variational principle $\delta\xi=\dot{\eta}+\left[ \xi,\eta\right]$ 
to the Lie algebra element, one arrives at the Euler-Poincar\'{e} equations \cite{cendra2003variational,MarsdenRatiu-book}
\begin{equation}\label{EPEq}
\frac{d}{dt}\frac{\delta \mathfrak{L}}{\delta\xi}=-\ad_{\xi}^{\ast}\frac{\delta \mathfrak{L}}{%
	\delta\xi}.   
\end{equation}

Our aim in this section is to couple two different Euler-Poincar\'{e} dynamics described by \eqref{EPEq}. This can be done in several ways: the naive approach is to put the two equations together, a procedure that we call direct coupling. In such case, one has two Lie algebras, say $\mathfrak{g}$ and $\mathfrak{h}$, and assumes only trivial representations of each other. Physically, this corresponds to the situation when each of the systems has its own individual motion. Instead of two trivial actions, one can assume only a one-sided action, i.e., a non-trivial right action of $\mathfrak{g}$ on $\mathfrak{h}$. This corresponds algebraically with the Lie algebra structure  
\begin{equation}\label{mpla-right}
\lbrack (\xi\oplus\eta),\,(\tilde{\xi}\oplus\tilde{\eta})]= [\xi,\tilde{\xi}]\oplus \big(
[\eta,\tilde{\eta}]+\eta\vartriangleleft \tilde{\xi}-\tilde{\eta}\vartriangleleft \xi\big).  
\end{equation}    
on the semi-direct product Lie algebra $\mathfrak{g}\ltimes \mathfrak{h}$. It is immediate to see that the Lie algebra bracket \eqref{mpla-right} is a particular instance of the matched pair Lie algebra in \eqref{mpla}, where the left action in assumed to be trivial. In this realization, referring to Proposition \ref{ad-*-prop}, on the dual space $\mathfrak{g}^* \oplus \mathfrak{h}^*$, the coadjoint action is computed to be 
\begin{equation}  \label{ad-*-right}
\ad_{(\xi\oplus\eta)}^{\ast}(\mu\oplus\nu)=\underbrace{ \big(ad^{\ast}_{\xi} \mu - \mathfrak{a}_{\eta}^{\ast}\nu\big)}_{\in ~ \mathfrak{g}^*}\oplus \underbrace{ \big(
ad^{\ast}_{\eta} \nu +\xi \overset{\ast }{\vartriangleright}\nu\big )}_{\in ~ \mathfrak{h}^*}.
\end{equation}
Then assuming a Lagrangian function $\mathfrak{L}=\mathfrak{L}\left( \xi\oplus \eta\right)$ on the Lie algebra $\mathfrak{g}\ltimes \mathfrak{h}$ and identifying the variations
\begin{equation}
\frac{\delta \mathfrak{L}}{\delta(\xi\oplus \eta)}=\frac{\delta \mathfrak{L}}{\delta \xi}\oplus \frac{\delta \mathfrak{L}}{\delta \eta}\in \mathfrak{g}^* \oplus \mathfrak{h}^*
\end{equation}
one can write the Euler-Poincar\'{e} equations on the semi-direct product Lie algebra. Referring to \eqref{EPEq}, the dynamical equations are computed to be 
\begin{equation}\label{mEP-right}
\frac{d}{dt}\frac{\delta\mathfrak{L}}{\delta\xi}    =-\ad_{\xi}^{\ast}
\frac{\delta\mathfrak{L}}{\delta\xi}+\underbrace{\mathfrak{a}_{\eta}^{\ast}\frac
{\delta\mathfrak{L}}{\delta\eta}}_{\text{action of $\mathfrak{g}$}},\qquad 
\frac{d}{dt}\frac{\delta\mathfrak{L}}{\delta\eta}    =-\ad_{\eta}^{\ast}%
\frac{\delta\mathfrak{L}}{\delta\eta}-\underbrace{\xi\overset{\ast}{\vartriangleright
}\frac{\delta\mathfrak{L}}{\delta\eta}}_{\text{action of $\mathfrak{g}$}}. 
\end{equation}
The first terms on the right hand sides of the equations (\ref{mEP-1}) are the individual Euler-Poincar\'{e} motions on the Lie algebras $\mathfrak{g}$ and $\mathfrak{h}$, respectively. To see this, compare those terms with the Euler-Poincar\'{e} equation in (\ref{EPEq}).  
We have labelled the rest of the terms on the right hand side of the equations to exhibit the manifestations of the right action. We refer to a surely incomplete list \cite{cendra1998lagrangian,
	n2001lagrangian, holm1998euler, esen2014tulczyjew,MarsRatiWein84} for more details on Lagrangian dynamics on semidirect products. 
	
	Instead of a nontrivial right action, one can consider a nontrivial left action of $\mathfrak{h}$ on $\mathfrak{g}$ and arrive alternatively at the semidirect product Lie algebra $\mathfrak{g}\rtimes \mathfrak{h}$ with a Lie bracket 
	\begin{equation}\label{mpla-left}
\lbrack (\xi\oplus \eta),\,(\tilde{\xi}\oplus \tilde{\eta})]=\big( [\xi,\tilde{\xi}]+\eta\vartriangleright \tilde{\xi}-\tilde{\eta}\vartriangleright \xi
\big)\oplus [\eta,\tilde{\eta}] .  
\end{equation}
Once more, we have derived this bracket from the matched pair Lie algebra \eqref{mpla} by employing a trivial right action. In this case, the coadjoint action given in Proposition \ref{ad-*-prop} reduces to 
\begin{equation}  \label{ad-*-left}
\ad_{(\xi\oplus\eta)}^{\ast}(\mu\oplus\nu)=\underbrace{ \big(ad^{\ast}_{\xi} \mu -\mu \overset{\ast }{%
	\vartriangleleft}\eta \big)}_{\in ~ \mathfrak{g}^*}\oplus \underbrace{ \big(
ad^{\ast}_{\eta} \nu + \mathfrak{b}%
_{\xi}^{\ast}\mu\big )}_{\in ~ \mathfrak{h}^*}.
\end{equation}
Now, we are ready to recast the Euler-Poincar\'{e} equations on the semidirect product Lie algebras 
$\mathfrak{g}\rtimes \mathfrak{h}$ as 
\begin{equation}\label{mEP-1-left}
 \frac{d}{dt}\frac{\delta\mathfrak{L}}{\delta\xi}     =-\ad_{\xi}^{\ast}%
\frac{\delta\mathfrak{L}}{\delta\xi}+\underbrace{\frac{\delta\mathfrak{L}}{\delta\xi
}\overset{\ast}{\vartriangleleft}\eta}_{\text{action of $\mathfrak{h}$}}, \qquad 
\frac{d}{dt}\frac{\delta\mathfrak{L}}{\delta\eta}     =-\ad_{\eta}^{\ast}%
\frac{\delta\mathfrak{L}}{\delta\eta} -\underbrace{\mathfrak{b}_{\xi}^{\ast}\frac
{\delta\mathfrak{L}}{\delta\xi}}_{\text{action of $\mathfrak{h}$}}. 
\end{equation}
As a last comment on the semidirect theory, we recall the matched pair compatibility conditions in \eqref{compcon-mpl}. It is evident for the semidirect product Lie algebras that these conditions reduce to the simple relations determining the left or right character of the actions. 

It is easy to observe now that mutual actions are beyond the realm of the semidirect product theory. We think that matched pair Lie algebras are proper for studying mutually acting Euler-Poincar\'{e} flows. Let us depict this realization and discuss how matched pair Euler-Poincar\'{e} equations contain the semidirect product theories as a particular instance. For this general theory, we consider two Lie algebras, say $\mathfrak{h}$ and $\mathfrak{g}$, under mutual interaction as given in \eqref{Lieact} assuming the conditions in \eqref{compcon-mpl}. Then, it is immediate to observe that one can define a matched pair Lie algebra bracket \eqref{mpla} on the product space $\mathfrak{g}\bowtie \mathfrak{h}$. For a Lagrangian function(al) $\mathfrak{L}=\mathfrak{L}\left( \xi,\eta\right)$ depending on $\xi$ in $\mathfrak{g}$, and $\eta$ in $\mathfrak{h}$, the matched Euler-Poincar\'{e} equations are computed to be
\begin{equation}
\begin{split}\label{mEP-1}
\frac{d}{dt}\frac{\delta\mathfrak{L}}{\delta\xi}  &  =-\ad_{\xi}^{\ast}%
\frac{\delta\mathfrak{L}}{\delta\xi}+\underbrace{\frac{\delta\mathfrak{L}}{\delta\xi
}\overset{\ast}{\vartriangleleft}\eta}_{\text{action of $\mathfrak{h}$}}+\underbrace{\mathfrak{a}_{\eta}^{\ast}\frac
{\delta\mathfrak{L}}{\delta\eta}}_{\text{action of $\mathfrak{g}$}},\\
\frac{d}{dt}\frac{\delta\mathfrak{L}}{\delta\eta}  &  =-\ad_{\eta}^{\ast}%
\frac{\delta\mathfrak{L}}{\delta\eta}-\underbrace{\xi\overset{\ast}{\vartriangleright
}\frac{\delta\mathfrak{L}}{\delta\eta}}_{\text{action of $\mathfrak{g}$}}-\underbrace{\mathfrak{b}_{\xi}^{\ast}\frac
{\delta\mathfrak{L}}{\delta\xi}}_{\text{action of $\mathfrak{h}$}},
\end{split}
\end{equation}
where we have used the matched pair coadjoint action in Proposition \ref{ad-*-prop}.
In (\ref{mEP-1}), the second term on the right hand side of the first equation and the third term on the right hand side of the second equation are obtained by dualizing the left action of the Lie algebra $\mathfrak{h}$ on $\mathfrak{g}$. So that, if this action is trivial, that is, we have a semi-direct product Lie algebra $\mathfrak{g}\rtimes \mathfrak{h}$, then the Euler-Poincar\'{e} dynamics is the one in (\ref{mEP-right}) without these terms. On the other hand, the third term on the right hand side of the first line, and the second equation on the right hand side of the second line are manifestation of the action of $\mathfrak{g}$ on $\mathfrak{h}$. If the action is trivial in this case, that is, those terms are identically zero, then we arrive at the Euler-Poincar\'{e} equations on the semi-direct product Lie algebra $\mathfrak{g}\rtimes \mathfrak{h}$ and hence, the semidirect product Euler-Poincar\'{e} equations in \eqref{mEP-1-left}. In this regard, the matched pair Euler-Poincar\'{e} equations (\ref{mEP-1}) involve both of these semi-direct theories so that it permits mutual interactions. The first approach to study the Euler-Poincar\'{e} equations from the point of view of the matched pair theory is presented in \cite{EsSu16} and we refer to \cite{EsKuSu21} for the matching of higher order Euler-Poincar\'{e} equations. 

Let us examine now the behaviour of the Euler-Poincar\'{e} formalism under differentiable transformations. Assume a Lie algebra homomorphism $\varphi:\mathfrak{K}_1\to\mathfrak{K}_2$. Then, according to the identity \eqref{dual-coad}, the pull-back operation for a Lagrangian function $\mathfrak{L}=\mathfrak{L}(\varphi(x))$ defined on $\mathfrak{K}_2$ reads 
\begin{equation} \label{dual-coad---}
\begin{split}
\varphi^*\frac{d}{dt}\left(\frac{\delta  \mathfrak{L}}{\delta \varphi(x)}\right) &= - \varphi^*\circ \ad^*_{\varphi(x)}\left(\frac{\delta  \mathfrak{L}}{\delta \varphi(x)}\right)
=-  
\ad^*_{x} \circ\, \varphi^* \left(\frac{\delta  \mathfrak{L}}{\delta \varphi(x)}\right)
\\&=\frac{d}{dt}\varphi^* \left(\frac{\delta  \mathfrak{L}}{\delta \varphi(x)}\right).
\end{split}
\end{equation}    
We conclude that a Lie algebra homomorphism respects the Euler--Poincar\'{e} flows. Furthermore, Proposition \ref{phi-prop} verifies that, if both of the domain and the image space admit matched pair decompositions in the realm of the equation \eqref{coad-phi*}, then the coadjoint actions are properly conserved.

\section{Symmetric Tensor Spaces} 

In this section, we summarize the notation and main definitions in the spaces of symmetric contravariant and covariant tensors.

\subsection{Symmetric Contravariant Tensors} \label{sec-symcovtenfields}~

We denote the space of $k$-th order symmetric
contravariant tensor fields on a manifold $\mathcal{Q}$ by $\mathfrak{T}^{\textbf{k}}\mathcal{Q}$. The space of zeroth order tensors $\mathfrak{T}^{\textbf{0}}\mathcal{Q}$ is the space $\mathcal{F}(\mathcal{Q})$ whereas the  first order tensors $\mathfrak{T}^{\textbf{1}}\mathcal{Q}$ are precisely smooth vector fields  
$\mathfrak{X}(\mathcal{Q})$. We take the sum of all orders  to define the space of symmetric
contravariant tensor fields
\begin{equation}
{\mathfrak{T}\mathcal{Q}}:=\sum _{\textbf{k}=0}^{\infty }\mathfrak{T}
^{\textbf{k}}\mathcal{Q}.
\end{equation}
A bold super script $\textbf{k}$ in the notation ${\mathfrak{T}\mathcal{Q}}$ stands to denote the $k$-th order symmetric
contravariant tensor fields. We reserve the bold notation to distinguish the space with indices. Accordingly, on a local chart $(q^i)$ over $\mathcal{Q}$, an
element of $\mathfrak{T}\mathcal{Q}$ is written as
\begin{equation} \label{X^n}
\mathbb{X}=\sum_{\textbf{k}=0}^\infty\mathbb{X}^{\textbf{k}}=\sum_{k=0}^{\infty }\mathbb{X}^{i_{1}i_{2}...i_{k}}(q)
\partial {q^{i_{1}}}\otimes ...\otimes \partial {q^{i_{k}}},
\end{equation}
where $\mathbb{X}^{i_{1}i_{2}...i_{k}}$ are functions on $Q$. 

For a $k$-th order symmetric contravariant tensor field $\mathbb{X}^{\textbf{k}}$ and an $m$-th order field $\mathbb{Y}^{\textbf{m}}$, where $k+m\geq 1$, the symmetric Schouten concomitant is defined to be \cite{KoMiSl93,Ma97,Sc40}
\begin{equation} \label{SC-def}
\begin{split}
\left[ \mathbb{X}^{\textbf{k}},\mathbb{Y}^{\textbf{m}}\right]&:=\left(k\mathbb{X}^{i_{m+1}...i_{m+k-1}\ell}\mathbb{Y}^{i_{1}...i_{m}}_{,\ell}
-
m\mathbb{Y}^{i_{k+1}...i_{k+m-1}\ell}  \mathbb{X}^{i_{1}i_{2}...i_{k}}_{,\ell}\right)\\&\hspace{3cm} \partial {q^{i_{1}}}\otimes ...\otimes \partial {q^{i_{k+m-1}}}.
\end{split}
\end{equation}
See that, the result is a symmetric contravariant tensor field of order $m+k-1$. Assuming $\left[\mathbb{X}^\textbf{0},\mathbb{Y}^\textbf{0}\right] = 0$, we define a Lie algebra structure on $\mathfrak{T}\mathcal{Q}$ as follows. For $\mathbb{X}=\sum_{\mathbf{k\geq 0}}\mathbb{X}^\mathbf{k}$ and $\mathbb{Y}=\sum_{\mathbf{m\geq 0}}\mathbb{Y}^\mathbf{m}$, the symmetric Schouten concomitant  is 
\begin{equation}\label{sc}
\left[ \mathbb{X},\mathbb{Y}\right]_S=\sum_{\textbf{k,m}=0}^{\infty
}\left[ \mathbb{X}^{\textbf{k}},\mathbb{Y}^{\textbf{m}}\right]_S.
\end{equation}
If $\mathbb{X}^{\textbf{k}}$ is  a first order tensor field $X$ in (\ref{SC-def}), then the bracket reduces to the Lie derivative
\begin{equation} \label{Lie-multi}
\left[ X,\mathbb{Y}^{\textbf{m}}\right]=\mathcal{L}_X \mathbb{Y}^{\textbf{m}}= (X^{\ell} \mathbb{Y}^{i_{1}...i_{m}}_{,\ell} - m\mathbb{Y}%
^{i_{2}...i_{m}\ell} X^{i_{1}}_{,\ell}) \partial {q^{i_{1}}}\otimes ...\otimes \partial {q^{i_{m}}}
\end{equation} 
of the tensor field $\mathbb{Y}^{m}$ in the direction of $X$. We introduce the divergence 
\begin{equation} \label{diver}
\begin{split}
{\rm div} &: \mathfrak{T}^{\textbf{k}}\mathcal{Q} \longrightarrow \mathfrak{T}^{\textbf{k-1}}\mathcal{Q}, \\ &\qquad \mathbb{X}^{i_{1}i_{2}\dots i_{k}}
\partial {q^{i_{1}}}\otimes \dots \otimes \partial {q^{i_{k}}} \mapsto k\mathbb{X}^{\ell i_{2}\dots i_{k}}_{,\ell}
\partial {q^{i_{2}}}\otimes \dots \otimes \partial {q^{i_{k}}},
\end{split}
\end{equation}
for $k>0$. Notice that, if $k=1$, then the operation in \eqref{diver} turns out to be the classical divergence of a vector field. We define ${\rm div} \mathbb{X}^{0}$ as $0$.  

Let us point out two Lie subalgebras of  $\mathfrak{T}\mathcal{Q}$. The first one is 
\begin{equation} \label{s}
\mathfrak{s}:=\sum _{\textbf{k}=0}^{1}\mathfrak{T} ^{\textbf{k}}\mathcal{Q}=\mathcal{F}(\mathcal{Q})  \rtimes \mathfrak{X}(\mathcal{Q}).
\end{equation}
In this case, the symmetric Schouten concomitant \eqref{sc} reduces to the semi-direct product Lie algebra structure
\begin{equation}\label{pa}
\left[ \left(\rho,Z \right) ,\left(\sigma,Y \right) \right]
=\left(Z(\sigma) -Y(\rho), [ Z,Y]_{JL} \right),
\end{equation}
for $\left(\rho,Z \right)$ and $\left(\sigma,Y \right)$ in $\mathfrak{s}$. Notice that the complementary subspace of $\mathfrak{s}$ is determined by  
\begin{equation} \label{n}
\mathfrak{n}:=\sum_{\textbf{k}=2}^\infty\,\mathfrak{T}
^{\textbf{k}}\mathcal{Q}
\end{equation} 
and it is a subalgebra of $\mathfrak{T}\mathcal{Q}$ as well. Therefore, we have two complementary Lie subalgebras of  $\mathfrak{T}\mathcal{Q}$. Due to the universal character of the matched pair decomposition, and in the light of Proposition \ref{universal-prop}, $\mathfrak{T}\mathcal{Q}$ is a matched pair Lie algebra. We record this fact in the following statement and we refer the reader to \cite{EsSu21} for the proof.  
\begin{proposition} \label{mpdTQ}
The pair of Lie subalgebras $\mathfrak{s}$ and $\mathfrak{n}$ exhibited in (\ref{s}) and (\ref{n}) of the space ${\mathfrak{T}\mathcal{Q}}$ of symmetric contravariant tensor fields is a matched pair of Lie algebras, and
\begin{equation} \label{mpTQ}
{\mathfrak{T}\mathcal{Q}} =\mathfrak{s}\bowtie \mathfrak{n}, \qquad \mathbb{X}=(\sigma,Y)\bowtie \mathbf{X}
\end{equation}
where $(\sigma,Y)$ is an element of $\mathfrak{s}$ whereas $\mathbf{X} = \sum_{\textbf{k}=2}^{\infty }\mathbb{X}^{\textbf{k}}$ is in $\mathfrak{n}$. Mutual actions are computed to be
\begin{equation} \label{actions} 
\begin{split}
 \vartriangleright&:\mathfrak{n}\otimes \mathfrak{s}\mapsto \mathfrak{s}, \qquad \mathbf{X}\vartriangleright(\sigma,Y)=(0,[\mathbb{X}^{2},\sigma]),\\
  \vartriangleleft&: \mathfrak{n}\otimes \mathfrak{s}\mapsto \mathfrak{n}, \qquad  \mathbf{X}\vartriangleleft(\sigma,Y)=\sum_{\textbf{k}=2}^{\infty }([\mathbb{X}^{\textbf{k+1}},\sigma]-\mathcal{L}_Y \mathbb{X}^{\textbf{k}}).
 \end{split}
\end{equation}
\end{proposition}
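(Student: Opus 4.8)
The plan is to realize $\mathfrak{T}\mathcal{Q}$ as a matched pair by invoking the universal decomposition of Proposition~\ref{universal-prop}, for which I must check three things: that $\mathfrak{s}$ is a Lie subalgebra, that $\mathfrak{n}$ is a Lie subalgebra, and that $\mathfrak{T}\mathcal{Q}=\mathfrak{s}\oplus\mathfrak{n}$ as vector spaces. All three follow from the single structural fact that the symmetric Schouten concomitant \eqref{SC-def} is graded, namely $[\mathfrak{T}^{\textbf{k}}\mathcal{Q},\mathfrak{T}^{\textbf{m}}\mathcal{Q}]\subseteq\mathfrak{T}^{\textbf{k+m-1}}\mathcal{Q}$, as recorded just after \eqref{SC-def}. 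Indeed, for $\mathfrak{s}=\mathfrak{T}^{\textbf{0}}\mathcal{Q}\oplus\mathfrak{T}^{\textbf{1}}\mathcal{Q}$ the only brackets to examine are those of orders $(0,0)$, $(0,1)$ and $(1,1)$, which land in orders $0$ (vanishing by the convention $[\mathbb{X}^{\textbf{0}},\mathbb{Y}^{\textbf{0}}]=0$), $0$ and $1$ respectively, so $\mathfrak{s}$ is closed and the induced bracket is the semidirect structure \eqref{pa}. For $\mathfrak{n}=\sum_{\textbf{k}\geq 2}\mathfrak{T}^{\textbf{k}}\mathcal{Q}$ any bracket of factors of orders $k,m\geq 2$ has order $k+m-1\geq 3\geq 2$, so $\mathfrak{n}$ is closed. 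The direct sum $\mathfrak{T}\mathcal{Q}=\mathfrak{s}\oplus\mathfrak{n}$ is immediate from the grading, since $\mathfrak{s}$ collects the orders $0,1$ and $\mathfrak{n}$ the orders $\geq 2$.

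With these verified, Proposition~\ref{universal-prop} applies with $\mathfrak{g}=\mathfrak{s}$ and $\mathfrak{h}=\mathfrak{n}$, yielding the matched pair $\mathfrak{T}\mathcal{Q}=\mathfrak{s}\bowtie\mathfrak{n}$ and the decomposition in \eqref{mpTQ}. The mutual actions are then read off from \eqref{mab-defn}: for $\mathbf{X}\in\mathfrak{n}$ and $(\sigma,Y)\in\mathfrak{s}$ one computes the ambient bracket $[\mathbf{X},(\sigma,Y)]_S$ and splits it into its $\mathfrak{s}$-component, which is $\mathbf{X}\vartriangleright(\sigma,Y)$, and its $\mathfrak{n}$-component, which is $\mathbf{X}\vartriangleleft(\sigma,Y)$.

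It remains to carry out this splitting. Writing $\mathbf{X}=\sum_{\textbf{k}\geq 2}\mathbb{X}^{\textbf{k}}$ and using bilinearity together with antisymmetry of the bracket \eqref{sc} and the Lie-derivative formula \eqref{Lie-multi}, I obtain
\begin{equation*}
[\mathbf{X},(\sigma,Y)]_S=\sum_{\textbf{k}\geq 2}\big([\mathbb{X}^{\textbf{k}},\sigma]-\mathcal{L}_Y\mathbb{X}^{\textbf{k}}\big),
\end{equation*}
since $[\mathbb{X}^{\textbf{k}},Y]=-[Y,\mathbb{X}^{\textbf{k}}]=-\mathcal{L}_Y\mathbb{X}^{\textbf{k}}$. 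Now I track orders: $[\mathbb{X}^{\textbf{k}},\sigma]$ has order $k-1$ while $\mathcal{L}_Y\mathbb{X}^{\textbf{k}}$ has order $k$. Hence the only contribution of order $\leq 1$ comes from $k=2$, giving the order-one field $[\mathbb{X}^{\textbf{2}},\sigma]$; this is the $\mathfrak{s}$-part, recorded as $(0,[\mathbb{X}^{\textbf{2}},\sigma])$, matching the first line of \eqref{actions}. Everything of order $\geq 2$ is the $\mathfrak{n}$-part: collecting $[\mathbb{X}^{\textbf{k}},\sigma]$ for $k\geq 3$ (reindexed by $k\mapsto k+1$) together with $-\mathcal{L}_Y\mathbb{X}^{\textbf{k}}$ for $k\geq 2$ yields $\sum_{\textbf{k}\geq 2}([\mathbb{X}^{\textbf{k+1}},\sigma]-\mathcal{L}_Y\mathbb{X}^{\textbf{k}})$, matching the second line of \eqref{actions}.

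The computation is essentially bookkeeping once the grading is in hand; the only genuinely delicate point, and the one place a sign or an index can slip, is the order-lowering by one unit in $[\mathbb{X}^{\textbf{2}},\sigma]$, which alone drops out of $\mathfrak{n}$ into the vector-field slot of $\mathfrak{s}$ and is therefore responsible for the nontriviality of the left action $\vartriangleright$. One might also worry about verifying the compatibility conditions \eqref{compcon-mpl} for these actions, but this is automatic here: because they arise from a genuine ambient Lie bracket via Proposition~\ref{universal-prop}, the Jacobi identity on $\mathfrak{T}\mathcal{Q}$ does the work.
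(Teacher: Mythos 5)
Your proof is correct and takes essentially the approach the paper intends: the text surrounding Proposition~\ref{mpdTQ} (which defers the detailed argument to \cite{EsSu21}) rests exactly on exhibiting $\mathfrak{s}$ and $\mathfrak{n}$ as complementary Lie subalgebras and invoking the universal property of Proposition~\ref{universal-prop}, with the actions read off from \eqref{mab-defn}. Your order-counting for the splitting is accurate, including the key point that only $[\mathbb{X}^{\textbf{2}},\sigma]$ drops into the vector-field slot of $\mathfrak{s}$, and the compatibility conditions \eqref{compcon-mpl} are indeed automatic in this setting.
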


\subsection{Symmetric Covariant Tensors}~ 
\label{notations}

 We shall now consider the space $\mathfrak{T}_{\textbf{k}}^{\ast }\mathcal{Q}$ of symmetric covariant (compactly supported) tensor fields of order $k$ as the dual of the space of symmetric contravariant tensor fields $\mathfrak{T}_{\textbf{k}}\mathcal{Q}$  \cite{GiHoTr08}. We denote the sum of all dual spaces by
\begin{equation*}
\mathfrak{T}^{\ast }\mathcal{Q} := \sum_{\textbf{k}=0}^{\infty }\mathfrak{T}_{\textbf{k}}^{\ast }\mathcal{Q}.
\end{equation*}
In local coordinates $(q^i)$ on $\mathcal{Q}$, an element of $\mathfrak{T}^{\ast }\mathcal{Q}$ can be written as
\begin{equation*}
\mathbb{A}=\bigoplus_{\textbf{m}=0}^{\infty }\mathbb{A}_\textbf{m}=\bigoplus
_{m=0}^{\infty }\mathbb{A}_{i_1\ldots i_m}( q) dq^{i_1}\odots dq^{i_m}.
\end{equation*}
Given the local tensor fields
\begin{equation*}
\mathbb{A}_{\textbf{m}}=\mathbb{A}_{i_{1}\ldots i_{m}} dq^{i_1}\odots dq^{i_m} \in \mathfrak{T}^{\ast}_\textbf{m}\mathcal{Q}, \quad \mathbb{X}^{\textbf{k}} = \mathbb{X}^{i_{1} ...i_{k}} \p{q^{i_1}}\odots \p{q^{i_k}} \in \mathfrak{T}^\textbf{k}\mathcal{Q},
\end{equation*} 
we denote the tensor contraction operation by $\lrcorner$, and
\begin{equation}
\mathbb{X}^{\textbf{k}}\lrcorner\mathbb{A}_{\textbf{m}} := \begin{cases}
\displaystyle \mathbb{X}^{i_{1}i_{2}...i_{k}}\mathbb{A}_{i_1i_{2}\ldots i_m} dq^{i_{k+1}}\odots dq^{i_m} \in \mathfrak{T}^\ast_{\textbf{m-k}}\mathcal{Q} & \text{if  } m> k, \\[.3cm]
\displaystyle \mathbb{X}^{i_{1}i_{2}...i_{k}}\mathbb{A}_{i_{1}i_{2}\ldots i_m}\p{q^{i_{m+1}}}\odots \p{q^{i_k}}  \in \mathfrak{T}^{\textbf{k-m}}\mathcal{Q} & \text{if  } k>m. 
\end{cases} \label{llcorner}
\end{equation} 
As an application and for future reference, using the operation \eqref{diver}, we compute the following contraction
\begin{equation} \label{div-cont}
{\rm div} \mathbb{X}^{\textbf{k}}  \lrcorner  \mathbb{A}_{\textbf{k+m-1}} = k \mathbb{X}^{\ell i_{m+1}\ldots i_{k+m-1}}_{,\ell} \mathbb{A}_{i_1\ldots i_{k+m-1}} dq^{i_1} \odots dq^{i_m}.
\end{equation}
After fixing a volume form $d\textbf{q}$ on $Q$, the duality between $\mathfrak{T}_{\textbf{k}}^{\ast }\mathcal{Q}$ and $\mathfrak{T}^\textbf{k}\mathcal{Q}$ is expressed in a  multiply-and-integrate form
\begin{equation}\label{pairing-1}
\left\langle \mathbb{A},\mathbb{X}\right\rangle =   \sum_{k\geq 0}\int_{\mathcal{Q}} \mathbb{A}_{i_1 \ldots i_k} (q) \mathbb{X}^{i_1\ldots i_k} (q) d\textbf{q}.
\end{equation}
We assume the topological conditions which make the pairing \eqref{pairing-1} convergent and we make use of the abbreviations given in  \cite{EsSu21}
\begin{equation}\label{abbri}
\begin{split}
\mathbb{A}_{\textbf{m+k-1}}\star \mathbb{X}^{\textbf{k}}&=m \mathbb{A}_{i_1\ldots i_{m-1} i_{m+1}\dots i_{m+k}} \mathbb{X}^{i_{m+1} \dots i_{m+k}}_{,i_m} dq^{i_1} \odots dq^{i_m} \in \mathfrak{T}^\ast_{\textbf{m}}\mathcal{Q}
\\
\mathbb{X}^{\textbf{k}} \ast \mathbb{A}_{\textbf{m+k-1}}&= k\mathbb{X}^{i_{m+1} \dots i_{m+k-1} \ell} \mathbb{A}_{i_1 \dots   i_{m+k-1}, \ell} dq^{i_1} \odots dq^{i_m}  \in \mathfrak{T}^\ast_{\textbf{m}}\mathcal{Q},
\end{split}
\end{equation}
for $k\geq 0$ and $m+k-1\geq 0$. Then by simply adding these two operations we introduce
\begin{equation}\label{Lie-gen}
{\rm L}_{\mathbb{X}^{\textbf{k}}} \mathbb{A}_{\textbf{m+k-1}}= \mathbb{A}_{\textbf{m+k-1}}\star \mathbb{X}^{\textbf{k}}+ \mathbb{X}^{\textbf{k}}\ast \mathbb{A}_{\textbf{m+k-1}} \in \mathfrak{T}^\ast_{\textbf{m}}\mathcal{Q}.
\end{equation}
If $m=0$ then $\mathbb{A}_{\textbf{k-1}}\star \mathbb{X}^{\textbf{k}}$ identically vanishes for all $k$, so that
\begin{equation} \label{gen-Lie-k}
{\rm L}_{\mathbb{X}^{\textbf{k}}} \mathbb{A}_{\textbf{k-1}}=\mathbb{A}_{\textbf{k-1}}\star \mathbb{X}^{\textbf{k}}+ \mathbb{X}^{\textbf{k}}\ast \mathbb{A}_{\textbf{k-1}}= \mathbb{X}^{\textbf{k}}\ast \mathbb{A}_{\textbf{k-1}} \in \mathfrak{T}^\ast_{0}\mathcal{Q}.
\end{equation}
In this case, $k$ must be greater than $0$. If, on the other hand, $k=0$ (that is $\mathbb{X}^{0}=\mathfrak{s}$ is a smooth function) in \eqref{Lie-gen}, then $\s \ast \mathbb{A}_{\textbf{m-1}}$  identically vanishes for all $m$, so that 
\begin{equation} \label{0-k}
\begin{split}
{\rm L}_{\s} \mathbb{A}_{\textbf{m-1}}&=\mathbb{A}_{\textbf{m-1}}\star \s + \s \ast \mathbb{A}_{\textbf{m-1}}=\mathbb{A}_{\textbf{m-1}}\star \s
\\&=m \mathbb{A}_{i_1\ldots i_{m-1}} \sigma_{,i_m} dq^{i_1} \odots dq^{i_m} \in \mathfrak{T}^\ast_{\textbf{m}}\mathcal{Q},
\end{split}
\end{equation}
provided that $m>0$. If, further $m=1$, then
\begin{equation} \label{0-0}
{\rm L}_{\mathbb{X}^{0}} \mathbb{A}_{0}={\rm L}_\sigma \rho = \sigma \ast  \rho+ \rho \star \sigma= 0 + \rho \star \sigma= \rho d \sigma.
\end{equation}
Finally, if $k=1$ then the notation reduces to the classical Lie derivative definition
\begin{equation} \label{1-m}
\begin{split}
{\rm L}_{\mathbb{X}^{1}} \mathbb{A}_{\textbf{m}}&=\mathcal{L}_Y\mathbb{A}_{\textbf{m}} \\&= \left ( m \mathbb{A}_{i_1\ldots i_{m-1} \ell} Y^{\ell}_{i_m} +  Y^{\ell} \mathbb{A}_{i_1 \dots   i_{m},\ell} \right) dq^{i_1} \odots dq^{i_m}  \in \mathfrak{T}^\ast_{\textbf{m}}\mathcal{Q}.
\end{split}
\end{equation}

The dual spaces of the Lie subalgebras $\mathfrak{s}$ in (\ref{s}) and $\mathfrak{n}$ in (\ref{n}) are
\begin{align}
\mathfrak{s}^{\ast} :& = \bigoplus_{\textbf{m}= 0}^1\mathbb{A}_{\textbf{m}}=\B{A}_0 \oplus \B{A}_1 \label{s*}
\\ \label{n*}
\mathfrak{n}^{\ast} :& = \bigoplus_{\textbf{k}\geq 2}\mathbb{A}_{\textbf{k}}=\B{A}_2 \oplus \B{A}_3 \oplus \dots
\end{align} 
respectively. Therefore, we arrive at the decomposition 
\begin{equation} \label{decomp-covariant-dual}
\mathfrak{T}^{\ast }\mathcal{Q}= \mathfrak{s}^{\ast}  \oplus  \mathfrak{n}^{\ast}, \qquad \mathbb{A}=(\rho,M) \oplus \mathbf A,
\end{equation}
where $(\rho,M)$ in $\mathfrak{s}^*$, whereas $\mathbf{A}=\sum_{k=2}^{\infty }\mathbb{A}_{\textbf{k}}$ in $\mathfrak{n}^\ast$. 

\section{Diffeomorphims Groups and Hamiltonian Vector Fields}

Let $\mathcal{M}$ be a smooth volume manifold. The
group of diffeomorphisms, denoted by ${\rm Diff} (\mathcal{M})$, on $\mathcal{M}$
is an infinite dimensional Lie group with multiplication 
\begin{equation}
{\rm Diff}(\mathcal{M}) \times {\rm Diff} ( \mathcal{M} )
\rightarrow {\rm Diff}(\mathcal{M}), \qquad ( \varphi ,\psi  )
\rightarrow \varphi \circ \psi
\end{equation}%
and inversion $\varphi \rightarrow \varphi ^{-1}.$ The unit element of the
group is the identity automorphism ${\rm id}$. As a manifold, 
${\rm Diff}(\mathcal{M}) $ is locally diffeomorphic to an infinite 
dimensional vector space, which can be a Banach, Hilbert or Fr\'{e}chet
space, and called respectively Banach Lie group, Hilbert Lie group or Fr\'{e}%
chet Lie Group \cite{sh04}. We will not discuss the details of the
functional analytical issues and refer to \cite{ChMa,em70}.

The elements of the tangent space $T_{\varphi }{\rm Diff}(\mathcal{M}) 
$ at $\varphi$  are material
velocity fields
\begin{equation}
V_{\varphi }:\mathcal{M}\rightarrow T\mathcal{M},
\end{equation}%
satisfying $\tau _{{\rm Diff}(\mathcal{M}) }\circ V_{\varphi }=\varphi 
$. In particular, the tangent space at the identity $T_{\rm id}{\rm Diff} ( \mathcal{M} ) $ is the space of smooth vector fields on $\mathcal{M}$,
that is, 
\begin{equation}
T_{\rm id}{\rm Diff}(\mathcal{M}) =\mathfrak{X} ( \mathcal{M}%
 ) .
\end{equation}%
A vector field on ${\rm Diff} ( \mathcal{M} ) $ is a map $V$ on ${\rm Diff}( \mathcal{M} ) $  taking values on the tangent bundle $T {\rm Diff}( \mathcal{M} ) $. A particular value of a vector field
at $\varphi$ in $ {\rm Diff} (\mathcal{M}) $ is the material velocity
field $V_{\varphi }$ in $T_{\varphi } {\rm Diff}(\mathcal{M}) $. $%
V_{\varphi }$ can be represented as a composition of a diffeomorphism $%
\varphi $ and a vector field $X$, that is 
\begin{equation}
V_{\varphi }=X\circ \varphi.
\end{equation}
This is the manifestation of the parallelizability of $ T{\rm Diff} ( \mathcal{M} )$.

We assume that a continuum rests in $\mathcal{M}$ and ${\rm Diff} (\mathcal{M}) $ acts on left by evaluation on the space $\mathcal{M}$ that is
\begin{equation}
{\rm Diff}(\mathcal{M}) \times \mathcal{M}\rightarrow \mathcal{M}%
, \qquad (\varphi ,\mathbf{x})\rightarrow \varphi \left( \mathbf{x}\right)
\end{equation}%
to reproduce the motion of particles. The right action of ${\rm Diff} ( \mathcal{M} ) $ commutes with the particle motion and constitutes an infinite
dimensional symmetry group of the kinematical description. This is known as particle relabelling symmetry \cite{arkh}.

The inner automorphism on the group ${\rm Diff}(\mathcal{M}) $ is 
\begin{equation}
I_{\psi }\left( \varphi ^{t}\right) =\psi \circ \varphi ^{t}\circ \psi ^{-1},
\end{equation}%
and its differentiation at $t=0$ along the direction $X$ gives adjoint
operator, that is 
\begin{eqnarray}
Ad_{\psi } ( X ) &=&T_{e}I_{\psi } ( X ) =T_{e}I_{\psi
}\big( \frac{d}{dt}\left. \varphi ^{t}\right\vert_{t=0}\big) =\frac{d}{dt%
}\left. I_{\psi }\varphi ^{t}\right\vert _{t=0}  \notag \\
&=&\frac{d}{dt}\left. \psi \circ \varphi ^{t}\circ \psi ^{-1}\right\vert
_{t=0}=T\psi \circ X\circ \psi ^{-1}=\psi _{\ast }X.
\end{eqnarray}%
Thus, the adjoint action of ${\rm Diff}(\mathcal{M})$ on its Lie algebra $\mathfrak{X}(\mathcal{M}) $ is the push-forward operation 
\begin{equation}
Ad_{\psi }\left( X\right) =\psi _{\ast }X.
\end{equation}%
The tangent space of ${\rm Diff}(\mathcal{M}) $ at the identity ${%
\rm id}$ consists of vector fields on $\mathcal{M}$. The Lie
algebra bracket on $T_{\rm  id}{\rm Diff}(\mathcal{M}) $ can be
calculated as the differential of the adjoint representation at the
identity. We differentiate $Ad_{\psi ^{t}} ( X ) $ with respect to $%
t$ at $t=0$ and in the direction of $Y$ to obtain 
\begin{equation}
\left[ Y,X\right] _{{\rm Diff}(\mathcal{M}) }=ad_{Y}X=\frac{d}{dt}%
\left. \psi _{\ast }^{t}X\right\vert _{t=0}=-\left[ Y,X\right] _{JL}=-%
\mathcal{L}_{Y}X,  \label{Liealgebra}
\end{equation}%
where $\left[ \text{ },\text{ }\right] _{JL}$ is the standard Jacobi-Lie
bracket of vector fields and $\mathcal{L}_{Y}X$ is the Lie derivative of $X$
with respect to $Y$. Thus, the Lie algebra structure is minus the Jacobi-Lie bracket.

The dual space of the Lie algebra $\mathfrak{X}(\mathcal{M})$ is the space of one-forms
densities on $\mathcal{M}$, that is, 
\begin{equation}
\mathfrak{X}^{\ast
}(\mathcal{M}) \simeq \Lambda ^{1}\left( \mathcal{M}\right)
\otimes Den(\mathcal{M}) .
\end{equation}%
The pairing is 
\begin{equation}
 \langle \alpha \otimes \varpi ,X \rangle =\int_{\mathcal{M}} \langle \alpha   ,X 
 \rangle \varpi   ,  \label{pairing-2}
\end{equation}%
where $X\in \mathfrak{X}(\mathcal{M}) \mathbf{,}$ $\alpha \in
\Lambda ^{1}(\mathcal{M}) $ and $\varpi $ is a volume form on $%
\mathcal{M}$. The pairing inside the integral is the natural pairing of
finite dimensional spaces $T_{x}\mathcal{M}$ and $T_{x}^{\ast }\mathcal{M}$. The dual $ad^{\ast }$ of the adjoint action $ad$ is
defined by%
\begin{equation}
\left\langle ad_{X}^{\ast }\left( \alpha \otimes \mu \right) ,Y\right\rangle
=-\left\langle \left( \alpha \otimes \mu \right) ,ad_{X}Y\right\rangle
=\int_{\mathcal{M}} \langle \alpha  ,\left[ X,Y
\right] _{JL}  \rangle \varpi
 ,
\end{equation}%
and after applying integration by parts, we find the explicit expression 
\begin{equation}
ad_{X}^{\ast }\left( \alpha \otimes \mu \right) =-\left( \mathcal{L}_{X}\alpha +\left( {\rm div}%
_{\varpi }X\right) \alpha \right) \otimes \varpi ,  \label{coadjdiff}
\end{equation}%
of the coadjoint action $ad^{\ast }$, where ${\rm div}_{\varpi}X$ is the
divergence of the vector field $X$ with respect to the volume form $\varpi$.
For the case of divergence free vector fields, (\ref{coadjdiff}) reduces to%
\begin{equation}
ad_{X}^{\ast }\alpha =-\mathcal{L}_{X}\alpha .  \label{newa}
\end{equation}

\subsection{Canonical Diffeomorphisms}

The group of canonical diffeomorphisms ${\rm Diff_{can}} ( T^{\ast }\mathcal{Q} ) $ on the canonical symplectic manifold $T^{\ast }\mathcal{Q}$ consists of diffeomorphisms $
\varphi $ preserving the symplectic form $\Omega
_{\mathcal{Q}}$, that is, $\varphi ^{\ast }\Omega _{\mathcal{Q}}=\Omega _{ \mathcal{Q}
}.$ This reads the conservation $\mathfrak{L}%
_{X}\Omega _{\mathcal{Q}}=0$ and the Cartan's formula $\mathfrak{L}%
_{X}=d\iota_{X}+\iota_{X}d$ leads to $d\iota_{X}\Omega _{\mathcal{Q}}=0.$

Therefore, the Lie algebra of ${\rm Diff_{can}} ( T^{\ast }\mathcal{Q} ) $ can be identified with the space of (locally)
Hamiltonian vector fields $\mathfrak{X}_{ham}\left(
T^{\ast }\mathcal{Q}\right) $, see \cite{leo01,Gu10,EsGu12}. The following equalities 
\begin{equation}\label{Hamaa}
\lbrack X_{h},X_{f}]_{\mathfrak{X}}=-[X_{h},X_{f}]_{JL}=X_{\{h,f\}}
\end{equation}%
set a correspondence between the space $\mathfrak{X}_{ham}\left( T^{\ast }\mathcal{Q}\right) $ and the space
of smooth functions $\mathcal{F}\left( T^{\ast }\mathcal{Q}\right)$. We write here the following Lie algebra homomorphism for future reference
\begin{equation}\label{epi-onto-Ham}
\varphi:\mathcal{F}(T^\ast\mathcal{Q}) \longrightarrow \mathfrak{X}_{\mathrm{\mathrm{ham}}}(T^\ast\mathcal{Q}), \qquad h \mapsto X_h.
\end{equation}
It is easy to see that the kernel of this mapping is the space of constant functions. 

The nonzero elements of the dual space of the Lie algebra $\mathfrak{X}_{ham}\left( T^{\ast }%
\mathcal{Q}\right) $ are given by
\begin{equation}\label{X*}
 \mathfrak{X}_{ham}^{\ast } ( T^{\ast }\mathcal{Q}%
 ) =\{\Pi \in \Lambda ^{1}(T^{\ast }\mathcal{Q}):{\rm div}
_{\Omega _{T^{\ast }Q}}\Pi  ^{\sharp }\neq 0\}.
\end{equation}

To find the precise definition of the dual space $\mathfrak{X}_{ham}^{\ast
}\left( T^{\ast }\mathcal{Q}\right) ,$ we require the $L_{2}$ pairing $%
\left\langle X_{h},\Pi  \right\rangle $ to be nondegenerate. We
take the volume $d\textbf{q}d\textbf{p}$ and compute 
\begin{equation}
\begin{split}
\int_{T^{\ast }\mathcal{Q}}\left\langle X_{h}  ,\Pi  \right\rangle d\textbf{q}d\textbf{p} 
&= -\int_{T^{\ast }\mathcal{Q}} \langle dh,\Pi  ^{\sharp }
\rangle d\textbf{q}d\textbf{p} 
=-\int_{T^{\ast }\mathcal{Q}}\iota_{\Pi  ^{\sharp }}\left( dh\right) d\textbf{q}d\textbf{p}  \notag \\
&=-\int_{T^{\ast }\mathcal{Q}}dh\wedge \iota_{\Pi ^{\sharp }}(d\textbf{q}d\textbf{p})
=\int_{T^{\ast }\mathcal{Q}}h~ d \iota_{\Pi ^{\sharp }}(d\textbf{q}d\textbf{p}) \\
&= \int_{T^{\ast }\mathcal{Q}}h~{\rm div}_{\Omega _{T^{\ast }Q}}\Pi  ^{\sharp }(d\textbf{q}d\textbf{p})  ,
\end{split}
\end{equation}
where we have used the musical isomorphism $\Omega _{\mathcal{Q}}^{\sharp
}$ induced from the
symplectic two-form $\Omega _{\mathcal{Q}}$ in the first step and we have applied integration by parts in the last step. Thus, in Darboux' coordinates,  we arrive at the following map
\begin{equation}
\Pi  \rightarrow f(q,p)={\rm div}_{\Omega
_{T^{\ast }Q}}\Pi  ^{\sharp }\left( \mathbf{z}\right), \qquad \Pi
_{i} dq^{i}+\Pi ^{i} dp_{i}\mapsto \frac{\partial \Pi ^{i}}{\partial q^{i}}-\frac{\partial \Pi _{i} }{\partial p_{i}}.
\label{mom1}
\end{equation}
which is defined to be the density function. Note that, if 
\begin{equation}
\Pi=\delta _{ij}\frac{\partial \psi }{\partial p_{i}}dq^{j}-\delta ^{ij}%
\frac{\partial \psi }{\partial q^{i}}dp_{j}
\end{equation}
for some function $\psi$, then
the identification in  (\ref{mom1}) reduces to the following Laplace
equation $f=\Delta \psi$.

It is important to remark that the action of ${\rm Diff_{can}} ( T^{\ast }\mathcal{Q} ) $ on $T^{\ast }\mathcal{Q}$ is a canonical action with momentum map 
\begin{equation}
\mathbf{J}:T^{\ast }\mathcal{Q}\rightarrow \mathfrak{X}_{ham}^{\ast } ( T^{\ast }\mathcal{Q}%
 ) , \qquad 
\left\langle \mathbf{J} ( \mathbf{z} ) ,X_{h}\right\rangle =h ( 
\mathbf{z} ) ,
\end{equation}%
where $X_{h}$ is the Hamiltonian vector field for the Hamiltonian function $h.$

\subsection{Generalized Complete Cotangent Lift}~

The lifting of a symmetric $k$-covariant tensor field  $\mathbb{X}^{\textbf{k}}$ on $\mathcal{Q}$ to a function on the cotangent bundle $T^*\mathcal{Q}$ is defined to be 
\begin{equation}\label{TkQ-to-F}
\kappa:{\mathfrak{T}^\textbf{k}\mathcal{Q}}\longrightarrow  \mathcal{F}(T^*\mathcal{Q}), \qquad \mathbb{X}^{\textbf{k}}  \mapsto \hat{\B{X}}^{\textbf{k}}:=\theta _{\mathcal{Q}}^{k} (\mathbb{X}^{\textbf{k}}),
\end{equation}
where $\theta_{ \mathcal{Q}%
}^{k}=\theta_{\mathcal{Q}}\otimes ...\otimes \theta _{\mathcal{Q}}$ is the $k$-th tensor power of the canonical-one form $\theta _{\mathcal{Q}}$ on $T^*\mathcal{Q}$. In the local picture of $\mathbb{X}^{\textbf{k}}$ in \eqref{X^n}, this operation reads a $p$-polynomial
\begin{equation*}
\hat{\B{X}}^{\textbf{k}} = \B{X}^{i_1\ldots i_k}(q)\,p_{i_1}\ldots p_{i_k}.
\end{equation*}
It is evident that the mapping \eqref{TkQ-to-F} is far from being surjective if one focuses on the smooth category. In this case, the space of flat functions with respect to the momentum variables cannot be obtained in the image space, whereas in the analytical category, the image space is equal to the functions on $T^*\mathcal{Q}$. Since flat functions cannot be observable for physical systems \cite{BlAs79}, we determine $\mathcal{F}(T^*\mathcal{Q})$ as the image space of the mapping $\kappa$. For a more general discussion involving the flat functions we refer once more to \cite{EsSu21}. 
Assuming the canonical Poisson bracket on $\mathcal{F}(T^*\mathcal{Q})$, we have that $\kappa$ is a Lie algebra anti-homomorphism, that is
\begin{equation}\label{sc-Poi}
\widehat{[\B{X},\B{Y}]}_S=-\{\hat{\B{X}},\hat{\B{Y}}\}
\end{equation}
where, referring to \eqref{TkQ-to-F}, we have that 
\begin{equation}\label{TQ-to-F}
\B{X}=\sum_{\textbf{k}=0}^\infty \B{X}^{\textbf{k}}\mapsto \hat{\B{X}} :=\sum_{\textbf{k}=0}^\infty \hat{\B{X}}^{\textbf{k}} .
\end{equation}

Composing $\kappa$ with the mapping \eqref{epi-onto-Ham} and multiplying by negative, we define the
generalized complete cotangent lift, abbreviated as {\small GCCL}, 
\begin{equation} \label{gccl}
\begin{split}
\text{{\small GCCL}} &:\varphi\circ \kappa=\mathfrak{T}\mathcal{Q}\longrightarrow \mathfrak{X}%
_{\mathrm{ham}}\left( T^{\ast }\mathcal{Q}\right), \\& \qquad \mathbb{X}=\sum_{\textbf{k}=0}^\infty \mathbb{X}^{\textbf{k}}\mapsto -X_{\hat{\mathbb{X}}}:=-\sum_{k=0}^\infty X_{\hat{\mathbb{X}}^{\textbf{k}}}
\end{split} 
\end{equation}
See that \eqref{gccl} takes a contravariant tensor field $\mathbb{X}$ on $\mathcal{Q}$
to minus the Hamiltonian vector field $X_{\hat{\mathbb{X}}}$
generated by the Hamiltonian function $\hat{\mathbb{X}}$ in the polynomial form \eqref{TkQ-to-F}. In the Darboux'
coordinates, {\small GCCL} is given by
\begin{equation}
\begin{split}
\text{{\small GCCL}} ( \mathbb{X}^{\textbf{k}} )
&=-X_{\hat{\mathbb{X}}^{\textbf{k}}} \\& =-kp_{i_{1}}p_{i_{2}}...p_{i_{k-1}}\mathbb{X}^{i_{1}...i_{k-1}\ell}\partial
_{q^{\ell}}+p_{i_{1}}p_{i_{2}}...p_{i_{k}}\frac{\partial \mathbb{X}%
^{i_{1}i_{2}...i_{k}}}{\partial q^{\ell}}\partial _{p_{\ell}}.  \label{Xnc}
\end{split} 
\end{equation}

\begin{lemma}
The generalized complete cotangent lift (\ref{gccl}) is a Lie algebra homomorphism, that is,
\begin{equation}
\text{{\small\rm GCCL}}  \left[ \mathbb{X},\mathbb{Y}\right]_S =\left[ \text{{\small \rm GCCL}} (\mathbb{X}),\text{{\small \rm GCCL}}( \mathbb{Y})\right]_{\mathfrak
{X}},  \label{iso}
\end{equation}%
where $\left[ \bullet,\bullet\right]_S $ is the Schouten concomitant (\ref{sc})
of tensor fields, and $\left[\bullet,\bullet\right]
_{\mathfrak{
X}}$ is the opposite Jacobi-Lie bracket of vector fields.
\end{lemma}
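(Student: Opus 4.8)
The plan is to avoid grinding through the coordinate expression \eqref{Xnc} and instead exploit the factorisation of the generalized complete cotangent lift through two maps whose bracket behaviour has already been recorded. By the definition \eqref{gccl}, we have $\text{{\small GCCL}}=-\varphi\circ\kappa$, where $\kappa:\mathfrak{T}\mathcal{Q}\to\mathcal{F}(T^*\mathcal{Q})$ is the polynomial lift \eqref{TkQ-to-F} and $\varphi:h\mapsto X_h$ is the map \eqref{epi-onto-Ham}. The two ingredients I would invoke are already in hand: equation \eqref{sc-Poi} asserts that $\kappa$ is a Lie algebra anti-homomorphism, $\widehat{[\mathbb{X},\mathbb{Y}]}_S=-\{\hat{\mathbb{X}},\hat{\mathbb{Y}}\}$, while equation \eqref{Hamaa} asserts that $\varphi$ intertwines the canonical Poisson bracket with the opposite Jacobi-Lie bracket, $X_{\{h,f\}}=[X_h,X_f]_{\mathfrak{X}}$.

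First I would compute the left-hand side of \eqref{iso}. Applying $\text{{\small GCCL}}=-\varphi\circ\kappa$ to $[\mathbb{X},\mathbb{Y}]_S$ and inserting \eqref{sc-Poi} gives $-X_{\widehat{[\mathbb{X},\mathbb{Y}]}_S}=-X_{-\{\hat{\mathbb{X}},\hat{\mathbb{Y}}\}}=X_{\{\hat{\mathbb{X}},\hat{\mathbb{Y}}\}}$, where I use the linearity of $h\mapsto X_h$. Equation \eqref{Hamaa} then rewrites this as $[X_{\hat{\mathbb{X}}},X_{\hat{\mathbb{Y}}}]_{\mathfrak{X}}$. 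Next I would compute the right-hand side: since $\text{{\small GCCL}}(\mathbb{X})=-X_{\hat{\mathbb{X}}}$ and the opposite Jacobi-Lie bracket is bilinear, the two signs cancel, $[-X_{\hat{\mathbb{X}}},-X_{\hat{\mathbb{Y}}}]_{\mathfrak{X}}=[X_{\hat{\mathbb{X}}},X_{\hat{\mathbb{Y}}}]_{\mathfrak{X}}$. Both sides reduce to the same expression, so \eqref{iso} follows. Conceptually, the overall minus sign in \eqref{gccl} is precisely what converts the anti-homomorphism $\varphi\circ\kappa$ into a genuine homomorphism, using the general fact that negating a Lie algebra anti-homomorphism produces a homomorphism.

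The main obstacle is therefore not computational but a matter of careful bookkeeping among three independent sources of signs: the overall minus in \eqref{gccl}, the anti-homomorphism sign in \eqref{sc-Poi}, and the opposite Jacobi-Lie convention in \eqref{Hamaa}. A secondary point I would check is that the composition is well behaved on each homogeneous degree, including $k=0$: for zeroth-order tensors $\hat{\mathbb{X}}^{\textbf{0}}$ depends on $q$ alone, so $\{\hat{\mathbb{X}}^{\textbf{0}},\hat{\mathbb{Y}}^{\textbf{0}}\}=0$, consistent with the convention $[\mathbb{X}^{\textbf{0}},\mathbb{Y}^{\textbf{0}}]=0$ adopted for \eqref{SC-def}; and although $\varphi$ has the constant functions as kernel, this does not disturb the bracket identity. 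Should a fully self-contained argument be preferred, one could instead verify \eqref{sc-Poi} directly from \eqref{SC-def} and the Darboux expression \eqref{Xnc}, but this is the longer route and is unnecessary once \eqref{sc-Poi} and \eqref{Hamaa} are granted.
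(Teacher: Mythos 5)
Your proposal is correct and follows essentially the same route as the paper's own proof: both rest on the factorisation of {\small GCCL} through $\kappa$ and $\varphi$, combining the anti-homomorphism identity \eqref{sc-Poi} with \eqref{Hamaa} and letting the two overall minus signs cancel in the bilinear bracket. The only difference is cosmetic --- the paper runs the chain of equalities starting from the right-hand side of \eqref{iso}, while you expand both sides and meet at $[X_{\hat{\mathbb{X}}},X_{\hat{\mathbb{Y}}}]_{\mathfrak{X}}$.
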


Let us provide the proof of this assertion. Accordingly, we compute
\begin{equation}
\begin{split}
\left[ \text{{\small \rm GCCL}} (\mathbb{X}),\text{{\small \rm GCCL}}( \mathbb{Y})\right]_{\mathfrak
{X}}&=-\left[ \text{{\small \rm GCCL}} (\mathbb{X}),\text{{\small \rm GCCL}}( \mathbb{Y})\right]_{JL}=- [-X_{\hat{\mathbb{X}}}, -X_{\hat{\mathbb{Y}}}]_{JL}\\ &=- [X_{\hat{\mathbb{X}}}, X_{\hat{\mathbb{Y}}}]_{JL}=X_{\{\hat{\mathbb{X}},\hat{\mathbb{Y}}\}}=-X_{\widehat{[\B{X},\B{Y}]}_S}\\
&=\text{{\small \rm GCCL}} [\B{X},\B{Y}]_S,  \label{iso-pf}
\end{split}
\end{equation}%
where we have employed the definition of {\small \rm GCCL} in the first line, identity \eqref{Hamaa} in the second equality displayed in the second line, and identity \eqref{sc-Poi}  in the second equality displayed in the third equality. 

We now exhibit the image of the constitutive Lie subalgebras $\mathfrak{s}$ and $\mathfrak{n}$ in the matched pair decomposition of $\mathfrak{T}\mathcal{Q}$ in Proposition \ref{mpdTQ} under {\small GCCL}. A direct calculation shows that the restriction of {\small GCCL} to $\mathfrak{s}$ reads
\begin{equation}\label{subem-s}
\begin{split} 
\mathfrak{s}\longrightarrow \mathfrak{s}^{c}&:=\text{{\small GCCL}}(\mathfrak{s}), \\&
(\sigma,Y) \mapsto -X_{\hat{\sigma}}- X_{\widehat{Y}} = \sigma_{,i} \partial
_{p_{i}}-X^{i}\partial _{q^{i}}+p_{j}%
X^{j}_{,i} \partial _{p_{i}},
\end{split} 
\end{equation}%
whereas  the restriction of {\small GCCL} to the Lie subalgebra $\mathfrak{n}$ is
\begin{equation}
\mathfrak{n}\longrightarrow \mathfrak{n}^{c}:=\text{{\small GCCL}}(\mathfrak{n}), \qquad \mathbf{X}=
\sum_{\textbf{k}=2}^\infty\mathbb{X}^{\textbf{k}} \mapsto -X_{\hat{\mathbf{X}}}:=-\sum_{\textbf{k}=2}^\infty X_{\hat{\B{X}}^\textbf{k}}. \label{subem-n}
\end{equation}
We refer to \cite{EsSu21} for proof of the following statement. 

\begin{proposition} \label{H_ham-decomp-Pro}
Let the pair of Lie algebras $(\mathfrak{s}^{c},\mathfrak{n}^{c})$  be the ones given by \eqref{subem-s}. Then, \eqref{subem-n} determines a  matched pair decomposition  of the space of Hamiltonian vector fields generated by non-flat smooth functions on $T^*\mathcal{Q}$ as 
\begin{equation} \label{H_ham-decomp-eq}
{\mathfrak{X}}_{\mathrm{ham}}(T^*\mathcal{Q}) \cong {\mathfrak{s}}^{c}\bowtie {\mathfrak{n}}^{c},
\end{equation} 
where ${\mathfrak{s}}^{c}$ is the image of $\C{F}(T^*\C{Q})/\mathbb{R}$ under the mapping $\varphi$ in \eqref{epi-onto-Ham}. 
\end{proposition}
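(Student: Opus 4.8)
The plan is to push the matched pair decomposition of Proposition~\ref{mpdTQ} forward along the generalized complete cotangent lift and then invoke the universal characterization of Proposition~\ref{universal-prop}. By the homomorphism property \eqref{iso}, the map $\mathrm{GCCL}=-\varphi\circ\kappa$ is a Lie algebra homomorphism $\mathfrak{T}\mathcal{Q}\to\mathfrak{X}_{\mathrm{ham}}(T^*\mathcal{Q})$, and by the definitions \eqref{subem-s} and \eqref{subem-n} it sends $\mathfrak{s}$ onto $\mathfrak{s}^c$ and $\mathfrak{n}$ onto $\mathfrak{n}^c$. As images of Lie subalgebras under a homomorphism, $\mathfrak{s}^c$ and $\mathfrak{n}^c$ are automatically Lie subalgebras of $\mathfrak{X}_{\mathrm{ham}}(T^*\mathcal{Q})$. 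It therefore remains to verify the two vector-space hypotheses of Proposition~\ref{universal-prop}, namely that $\mathfrak{s}^c$ and $\mathfrak{n}^c$ span $\mathfrak{X}_{\mathrm{ham}}(T^*\mathcal{Q})$ and meet only in $\{0\}$; the matched pair structure, with its mutual actions read off from the bracket via \eqref{mab-defn}, is then produced by that proposition.

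The key preliminary is the kernel of $\mathrm{GCCL}$. The lift $\kappa$ is a linear isomorphism of $\mathfrak{T}\mathcal{Q}$ onto the non-flat functions $\mathcal{F}(T^*\mathcal{Q})$, carrying $\mathfrak{T}^{\mathbf{k}}\mathcal{Q}$ to the $p$-homogeneous polynomials of degree $k$; combined with the fact recorded after \eqref{epi-onto-Ham} that $\ker\varphi$ is exactly the constants, this gives $\ker(\mathrm{GCCL})=\kappa^{-1}(\mathbb{R})=\mathbb{R}$, the constant functions sitting inside $\mathfrak{T}^{\mathbf{0}}\mathcal{Q}=\mathcal{F}(\mathcal{Q})$. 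Two consequences are crucial. First, since $\mathcal{F}(\mathcal{Q})\subset\mathfrak{s}$, this kernel is contained entirely in $\mathfrak{s}$. Second, because $\varphi$ is surjective onto $\mathfrak{X}_{\mathrm{ham}}(T^*\mathcal{Q})$ and $\mathcal{F}(T^*\mathcal{Q})$ is by definition the image of $\kappa$, the composite $\mathrm{GCCL}$ is surjective; applying it to $\mathfrak{T}\mathcal{Q}=\mathfrak{s}+\mathfrak{n}$ yields at once the spanning property $\mathfrak{X}_{\mathrm{ham}}(T^*\mathcal{Q})=\mathfrak{s}^c+\mathfrak{n}^c$.

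For directness, take $v\in\mathfrak{s}^c\cap\mathfrak{n}^c$ and write $v=\mathrm{GCCL}(s)=\mathrm{GCCL}(n)$ with $s\in\mathfrak{s}$ and $n\in\mathfrak{n}$. Then $s-n\in\ker(\mathrm{GCCL})=\mathbb{R}\subset\mathfrak{s}$, so $n=s-(s-n)$ lies in $\mathfrak{s}$; since $\mathfrak{s}\cap\mathfrak{n}=\{0\}$ by Proposition~\ref{mpdTQ}, we get $n=0$ and hence $v=0$. Consequently $\mathfrak{X}_{\mathrm{ham}}(T^*\mathcal{Q})=\mathfrak{s}^c\oplus\mathfrak{n}^c$ as vector spaces, and Proposition~\ref{universal-prop} delivers the matched pair decomposition \eqref{H_ham-decomp-eq}. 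The same computation shows $\mathrm{GCCL}$ is injective on $\mathfrak{n}$ while its restriction to $\mathfrak{s}$ has kernel $\mathbb{R}$, so $\mathfrak{n}^c\cong\mathfrak{n}$ and $\mathfrak{s}^c\cong\mathfrak{s}/\mathbb{R}$, in agreement with the description of $\mathfrak{s}^c$ in the statement; one may further check via Lemma~\ref{mp-homo} that the mutual actions of $\mathfrak{s}^c$ and $\mathfrak{n}^c$ are the push-forwards under $\mathrm{GCCL}$ of the actions \eqref{actions}. The main point to get right is precisely the bookkeeping of this kernel: one must confirm that $\ker(\mathrm{GCCL})$ lies inside $\mathfrak{s}$ and is disjoint from $\mathfrak{n}$, since $\mathrm{GCCL}$ is not injective and surjectivity by itself would not guarantee that the induced sum $\mathfrak{s}^c+\mathfrak{n}^c$ stays direct.
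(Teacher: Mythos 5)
Your proposal is correct. Note that the paper does not actually prove this proposition in the text: it refers the reader to \cite{EsSu21} and afterwards only computes the mutual actions of $\mathfrak{s}^{c}$ and $\mathfrak{n}^{c}$ by appealing to Proposition \ref{universal-prop} and Lemma \ref{mp-homo}. Your argument supplies exactly what that appeal presupposes, namely that $\mathfrak{X}_{\mathrm{ham}}(T^*\mathcal{Q})=\mathfrak{s}^{c}\oplus\mathfrak{n}^{c}$ as vector spaces, and you correctly identify the one point that genuinely needs care: $\mathrm{GCCL}$ is not injective, so the directness of $\mathfrak{s}^{c}+\mathfrak{n}^{c}$ is not automatic from $\mathfrak{s}\cap\mathfrak{n}=\{0\}$. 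Your resolution --- $\ker(\mathrm{GCCL})=\kappa^{-1}(\mathbb{R})=\mathbb{R}\subset\mathfrak{T}^{\mathbf{0}}\mathcal{Q}\subset\mathfrak{s}$, so that $v=\mathrm{GCCL}(s)=\mathrm{GCCL}(n)$ forces $n\in\mathfrak{s}\cap\mathfrak{n}=\{0\}$ --- is exactly right, and it also explains the quotient $\mathcal{F}(T^*\mathcal{Q})/\mathbb{R}$ appearing in the statement of the proposition. The only implicit ingredient worth making explicit is the injectivity of $\kappa$ itself, which follows from the unique decomposition of a fiberwise-polynomial function on $T^*\mathcal{Q}$ into its $p$-homogeneous components (and is the reason the paper restricts to the non-flat, i.e.\ polynomial, category in the first place). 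With that remark your proof is complete and consistent with the paper's framework.
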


Let us present the result of Proposition \ref{H_ham-decomp-Pro} by computing  the mutual actions for future reference. Notice that decomposition (\ref{H_ham-decomp-eq}) asserts that a Hamiltonian vector field $-X_h$ can be written as the pair
\begin{equation}
-X_h=X_h^{\mathfrak{s}} + X_h^{\mathfrak{n}}, 
\end{equation}
where the constitutive vector fields are defined to be 
\begin{equation}
X_h^{\mathfrak{s}}=\text{\small GCCL}(\sigma,Y)= -X_{\widehat{(\sigma,Y)}}, \qquad  X_h^{\mathfrak{n}}=\text{\small GCCL}(\textbf{X})=-X_{\hat
{\textbf{X}}}
\end{equation}
for some $(\sigma,Y)$ in $\G{s}$ and $\textbf{X}$ in $\G{n}$. In order to compute the mutual actions, we may refer to the universal property in Proposition \ref{universal-prop} once more, or instead, \eqref{Lie-hom-eq} given in Lemma \eqref{mp-homo}, so that we get 
\begin{equation}
\begin{split}
X_h^{\mathfrak{n}} \vartriangleright X_h^{\mathfrak{s}}
&=(
-X_{\hat{\mathbf{X}}}) \vartriangleright (-X_{\widehat{(\sigma,Y)}})=-X_{\widehat{\mathbf{X} \vartriangleright (\sigma,Y)  }}\\&=-X_{\widehat{[\mathbb{X}^2,\s]}}
,
\end{split}
\end{equation}
where we have used the left action in \eqref{actions}. In the light of the identity in \eqref{Lie-hom-eq}, the right action is computed to be
\begin{equation}
\begin{split}
X_h^{\mathfrak{n}} \vartriangleleft X_h^{\mathfrak{s}}
&=(
-X_{\hat{\mathbf{X}}}) \vartriangleleft (-X_{\widehat{(\sigma,Y)}})=-X_{\widehat{\mathbf{X} \vartriangleleft (\sigma,Y)  }}
\\&=\sum_{\textbf{k}=2}^{\infty }\Big( X_{\widehat{\mathcal{L}_Y \mathbb{X}^{\textbf{k}}}}-X_{\widehat{[\mathbb{X}^{\textbf{k+1}},\sigma]}}\Big).
\end{split}
\end{equation}

\section{Euler-Poincar\'{e} Flows}

In this section we introduce the decomposition of Euler-Poincar\'{e} dynamics on the space of contravariant tensor fields and on Hamiltonian vector fields.

\subsection{EP Dynamics on the Space Contravariant Tensor Fields}

Being a Lie algebra, the space $\mathfrak{T}\mathcal{Q}$
of contravariant fields determines coadjoint action on the space $\mathfrak{T}\mathcal{Q}$
of covariant fields $\mathfrak{T}\mathcal{Q}$ as follows
\begin{equation} \label{coad-TQ}
\langle \ad^\ast_{\B{X}}\B{A},\B{Y}\rangle =\langle
\B{A},[\B{Y},\B{X}]_S \rangle
\end{equation} 
for all $\B{X}$ and $\B{Y}$ in $\mathfrak{T}\mathcal{Q}$ and $\B{A}$ in $\mathfrak{T}^*\mathcal{Q}$. Here, the pairing is the one in \eqref{pairing-2} whereas the bracket is the Schouten concomitant \eqref{sc}. To have the explicit expression of the coadjoint action, we consider $\mathbb{X}^{\textbf{k}}, \mathbb{Y}^m$, and $ \mathbb{A}_{\textbf{m+k-1}}$, after fixing the volume form $d\textbf{q}$ on $\mathcal{Q}$, we perform  the following calculation
\begin{equation}
\begin{split}
&\big\langle  ad^\ast_{\mathbb{X}^{\textbf{k}}}\B{A}_{\textbf{m+k-1}}, \mathbb{Y}^\textbf{m} \big\rangle  =  \big\langle {\B{A}}_{\textbf{m+k-1}}, [\mathbb{Y}^\textbf{m}, \mathbb{X}^{\textbf{k}}] \big\rangle  \\
&=
 \int_{\mathcal{Q}} \mathbb{A}_{i_1\ldots i_{k+m-1}}\big(m\mathbb{Y}^{i_{k+1}\ldots i_{k+m-1}\ell} \mathbb{X}^{i_1\ldots i_k}_{,\ell} - k\mathbb{X}^{i_{m+1}\ldots i_{k+m-1}\ell} \mathbb{Y}^{i_1\ldots i_m}_{,\ell} \big)d\textbf{q} 
 \\
 &=m\int_{\mathcal{Q}} \mathbb{A}_{i_1\ldots i_{k+m-1}}\mathbb{Y}^{i_{k+1}\ldots i_{k+m-1}\ell} \mathbb{X}^{i_1\ldots i_k}_{,\ell}d\textbf{q} 
 \\
& \qquad  +k \int_{\mathcal{Q}} \,\mathbb{A}_{i_1\ldots i_{k+m-1},\ell}\mathbb{X}^{i_{m+1}\ldots i_{k+m-1}\ell}\mathbb{Y}^{i_1\ldots i_m} d\textbf{q} \\& \qquad  +k \int_{\mathcal{Q}}  \mathbb{A}_{i_1\ldots i_{k+m-1}} \mathbb{X}^{i_{m+1}\ldots i_{k+m-1}\ell}_{,\ell}\mathbb{Y}^{i_1\ldots i_m}d\textbf{q} 
\\ &=
\big\langle \mathbb{A}_{\textbf{m+k-1}}\star \mathbb{X}^{\textbf{k}},\mathbb{Y}^\textbf{m} \big\rangle 
+
 \big\langle \mathbb{X}^{\textbf{k}} \ast \mathbb{A}_{\textbf{m+k-1}} ,\mathbb{Y}^\textbf{m} \big\rangle
+
 \big\langle {\rm div} \mathbb{X}^{\textbf{k}}  \lrcorner  \mathbb{A}_{\textbf{k+m-1}},\mathbb{Y}^\textbf{m} \big\rangle,
 \end{split}
\end{equation}
where, in the second line, we have employed the explicit expression of the Schouten concomitant in \eqref{SC-def} and in the last line, we have referred to the definitions of $\star$ and $\ast$ in \eqref{abbri}, and the contraction of the divergence in \eqref{div-cont}. Recalling the notation in \eqref{Lie-gen}, we can collect the first two terms in the last line of the calculation, providing the following realization of the coadjoint action
\begin{equation} \label{Coad-VF}
 \ad^\ast_{\mathbb{X}^{\textbf{k}}}\mathbb{A}_{\textbf{m+k-1}} = {\rm L}_{\mathbb{X}^{\textbf{k}}} \mathbb{A}_{\textbf{m+k-1}} + {\rm div} \mathbb{X}^{\textbf{k}}  \lrcorner  \mathbb{A}_{\textbf{k+m-1}}\in \mathfrak{T}^\ast_{\textbf{m}}\mathcal{Q}.
\end{equation}
This calculation excludes the case $m=0$. To have that, we use \eqref{gen-Lie-k}.

We collect all of the discussions into the following proposition \cite{EsSu21}.  
\begin{proposition}\label{prop-coad}
The coadjoint action of $\mathfrak{T}\mathcal{Q}$ on $\mathfrak{T}^{\ast }\mathcal{Q}$ is given by
\begin{equation}\label{Coad-VF-}
\ad^\ast_{\B{X}}\B{A} = \bigoplus_{\textbf{m}=0}^\infty \widetilde{\mathbb{A}}_m,
\end{equation}
where 
\begin{equation}
\begin{split}  \label{Coad-VF-eq}
\widetilde{\mathbb{A}}_0
&=
\sum_{\textbf{k} = 1}^\infty\, 
  \mathbb{X}^{\textbf{k}}\ast \mathbb{A}_{ \textbf{k-1}}
+ {\rm div} \mathbb{X}^{\textbf{k}}  \lrcorner  \mathbb{A}_{\textbf{k-1}}
\\
\widetilde{\mathbb{A}}_m&=\sum_{\textbf{k} = 0}^\infty\,  {\rm L}_{\mathbb{X}^{\textbf{k}}} \mathbb{A}_{\textbf{m+k-1}} + {\rm div} \mathbb{X}^{\textbf{k}}  \lrcorner  \mathbb{A}_{\textbf{k+m-1}}, \qquad m\geq 1.
 \end{split}
\end{equation}
\end{proposition}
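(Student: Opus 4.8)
The plan is to assemble the proposition from the single-component coadjoint action already recorded in \eqref{Coad-VF}, exploiting the fact that both the pairing \eqref{pairing-1} and the Schouten concomitant \eqref{sc} respect the grading by tensor order. First I would expand $\B{X}=\sum_{\textbf{k}\geq 0}\mathbb{X}^{\textbf{k}}$ and $\B{A}=\sum_{\textbf{m}\geq 0}\mathbb{A}_{\textbf{m}}$ into homogeneous components and invoke the bilinearity of the defining relation \eqref{coad-TQ}. Pairing against a homogeneous test element $\mathbb{Y}^{\textbf{m}}$, the bracket $[\mathbb{Y}^{\textbf{m}},\mathbb{X}^{\textbf{k}}]_S$ lies in $\mathfrak{T}^{\textbf{m+k-1}}\mathcal{Q}$ by \eqref{SC-def}, so $\langle\B{A},[\mathbb{Y}^{\textbf{m}},\mathbb{X}^{\textbf{k}}]_S\rangle$ isolates exactly the component $\mathbb{A}_{\textbf{m+k-1}}$ of $\B{A}$. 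This degree-matching is the structural observation that drives the whole decomposition: the part of $\ad^\ast_{\B{X}}\B{A}$ landing in $\mathfrak{T}^\ast_{\textbf{m}}\mathcal{Q}$ is precisely $\widetilde{\mathbb{A}}_m=\sum_{\textbf{k}}\ad^\ast_{\mathbb{X}^{\textbf{k}}}\mathbb{A}_{\textbf{m+k-1}}$.

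For $m\geq 1$ each summand $\ad^\ast_{\mathbb{X}^{\textbf{k}}}\mathbb{A}_{\textbf{m+k-1}}$ is exactly the quantity ${\rm L}_{\mathbb{X}^{\textbf{k}}}\mathbb{A}_{\textbf{m+k-1}}+{\rm div}\,\mathbb{X}^{\textbf{k}}\lrcorner\mathbb{A}_{\textbf{k+m-1}}$ obtained in the integration-by-parts computation \eqref{Coad-VF}. Summing over $\textbf{k}\geq 0$ then reproduces the second line of \eqref{Coad-VF-eq} with no additional work, once I check that the $k=0$ term is benign: there $\mathbb{X}^{\textbf{0}}$ is a function, ${\rm div}\,\mathbb{X}^{\textbf{0}}=0$ by the convention following \eqref{diver}, and ${\rm L}_{\mathbb{X}^{\textbf{0}}}\mathbb{A}_{\textbf{m-1}}$ collapses to the single $\star$-term displayed in \eqref{0-k}, so no spurious contribution is introduced.

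The only genuinely separate case, and the step I expect to demand the most care, is $m=0$, since the precomputation \eqref{Coad-VF} was carried out under $m\geq 1$ and its $\star$-contribution carries an explicit factor of $m$. Here I would appeal to \eqref{gen-Lie-k}: when the output order is zero the operation $\mathbb{A}_{\textbf{k-1}}\star\mathbb{X}^{\textbf{k}}$ vanishes identically, so ${\rm L}_{\mathbb{X}^{\textbf{k}}}\mathbb{A}_{\textbf{k-1}}$ reduces to $\mathbb{X}^{\textbf{k}}\ast\mathbb{A}_{\textbf{k-1}}$. Furthermore the index constraint $m+k-1=k-1\geq 0$ forces $k\geq 1$ for $\mathbb{A}_{\textbf{k-1}}$ to be defined, which explains why the sum for $\widetilde{\mathbb{A}}_0$ in \eqref{Coad-VF-eq} starts at $\textbf{k}=1$ rather than $\textbf{k}=0$. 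Combining this surviving $\ast$-term with the divergence contraction yields the stated expression for $\widetilde{\mathbb{A}}_0$, and assembling the graded pieces into $\bigoplus_{\textbf{m}\geq 0}\widetilde{\mathbb{A}}_m$ finishes the argument; the termwise rearrangement of the infinite sums is legitimate under the convergence hypotheses imposed on \eqref{pairing-1}.
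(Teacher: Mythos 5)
Your proposal is correct and follows essentially the same route as the paper: the paper derives the single-component formula \eqref{Coad-VF} by the integration-by-parts computation immediately preceding the proposition and then simply ``collects the discussion'' (citing \cite{EsSu21}), while you make explicit the grading/bilinearity bookkeeping that justifies the assembly, including the correct treatment of the $k=0$ term via \eqref{0-k} and of the $m=0$ line via \eqref{gen-Lie-k}. No gaps.
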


Now, we assume a Lagrangian function $\mathfrak{L}=\mathfrak{L}(\mathbb{X})$ on $\mathfrak{T}\mathcal{Q}$. 
The generic formulation in \eqref{EPEq} determines the Euler-Poincar\'{e} equations on the spce of contravariant tensor fields as 
 \begin{equation}\label{EP-comp-TQ}
\frac{d}{dt}\frac{\delta \mathfrak{L}}{\delta \mathbb{X}}=- \ad^\ast_{\B{X}}\frac{\delta \mathfrak{L}}{\delta \mathbb{X}}.
\end{equation}
We write an element of $\mathfrak{T}\mathcal{Q}$ as $(\sigma,Y,\sum_{\textbf{k}=2} \mathbb{X}^{\textbf{k}})$. We assume that the variation of the Lagrangian is 
\begin{equation}
\frac{\delta \mathfrak{L}}{\delta \mathbb{X}}= \frac{\delta \mathfrak{L}}{\delta \big( \sum_{\mathbf{m\geq 0}}\mathbb{X}^\mathbf{m}\big)}=\bigoplus_{\textbf{m}=0}  \frac{\delta \mathfrak{L}}{\delta \mathbb{X}^\mathbf{m}} = \big(\frac{\delta \mathfrak{L}}{\delta \sigma}\oplus \frac{\delta \mathfrak{L}}{\delta Y}\big)  \bigoplus_{\textbf{m}=2}  \frac{\delta \mathfrak{L}}{\delta \mathbb{X}^\mathbf{m}}
\end{equation}
 Then, in the light of Proposition \eqref{prop-coad}, we write the Euler-Poincar\'{e} equation explicitly as 
\begin{equation} \label{EP-TQ}
\begin{split}
\frac{d}{dt}\frac{\delta \mathfrak{L}}{\delta \sigma}&=
-\big(\mathcal{L}_Y(\frac{\delta \mathfrak{L}}{\delta \sigma})+ {\rm div}(Y)  \frac{\delta \mathfrak{L}}{\delta \sigma} \big) - \big( 
\mathbb{X}^2\ast \frac{\delta \mathfrak{L}}{\delta Y}
+
{\rm div}\mathbb{X}^2 \lrcorner  \frac{\delta \mathfrak{L}}{\delta Y}
\big)\\&\hspace{3cm}-
\sum_{\textbf{k} = 2}^\infty\, \big(
  \mathbb{X}^{\textbf{k+1}}\ast \frac{\delta \mathfrak{L}}{\delta \mathbb{X}^\mathbf{k}}
+ {\rm div} \mathbb{X}^{\textbf{k+1}}  \lrcorner  \frac{\delta \mathfrak{L}}{\delta \mathbb{X}^\mathbf{k}}\big)
\\
\frac{d}{dt}\frac{\delta \mathfrak{L}}{\delta Y}&= - \big( \frac{\delta \mathfrak{L}}{\delta \sigma} d \sigma +  \mathcal{L}_Y (\frac{\delta \mathfrak{L}}{\delta Y}) + 
{\rm div}(Y)\frac{\delta \mathfrak{L}}{\delta Y }\big) \\&\hspace{3cm}-
\sum_{\textbf{k} = 2}^\infty\,\big(  {\rm L}_{\mathbb{X}^{\textbf{k}}} \frac{\delta \mathfrak{L}}{\delta \mathbb{X}^\mathbf{k}} + {\rm div} \mathbb{X}^{\textbf{k}}  \lrcorner  \frac{\delta \mathfrak{L}}{\delta \mathbb{X}^\mathbf{k}} \big),
\\ 
 \frac{d}{dt} \frac{\delta \mathfrak{L}}{\delta \mathbb{X}^\mathbf{2}} &= 
-
\big(\frac{\delta \mathfrak{L}}{\delta Y }\star \sigma\big)  
 -
 \big(\C{L}_Y \frac{\delta \mathfrak{L}}{\delta \mathbb{X}^\mathbf{2}}+ {\rm div}Y  \frac{\delta \mathfrak{L}}{\delta \mathbb{X}^\mathbf{2}} \big)\\&\hspace{3cm}-
\sum_{\textbf{k}=3}^\infty \,\big(
{\rm L}_{\mathbb{X}^{\textbf{k-1}}} \frac{\delta \mathfrak{L}}{\delta \mathbb{X}^\mathbf{k}}+ {\rm div} \mathbb{X}^{\textbf{k-1}}  \lrcorner  \frac{\delta \mathfrak{L}}{\delta \mathbb{X}^\mathbf{k}}\big)
\\
 \frac{d}{dt} \frac{\delta \mathfrak{L}}{\delta \mathbb{X}^\mathbf{m}}&= - \big(
 \C{L}_Y \frac{\delta \mathfrak{L}}{\delta \mathbb{X}^\mathbf{m}} + {\rm div}Y  \frac{\delta \mathfrak{L}}{\delta \mathbb{X}^\mathbf{m}} + \frac{\delta \mathfrak{L}}{\delta \mathbb{X}^\mathbf{m-1}}\star \sigma 
\big) \\&\hspace{3cm}-
\sum_{\textbf{k}=2}^\infty \,\big(
{\rm L}_{\mathbb{X}^{\textbf{k}}} \frac{\delta \mathfrak{L}}{\delta \mathbb{X}^\mathbf{m+k-1}} + {\rm div} \mathbb{X}^{\textbf{k}}  \lrcorner \frac{\delta \mathfrak{L}}{\delta \mathbb{X}^\mathbf{m+k-1}}\big),
\end{split}
\end{equation}
where the last equality is valid when $\mathbf{m}\geq 2$. Recall \eqref{abbri} for coordinate realizations of $\star$ and $\ast$ and \eqref{div-cont} for the contractions involving the divergence operators. 

\subsection{Decomposition of EP Dynamics on the Space Contravariant Tensor Fields}

We are interested in two subdynamics of the Euler-Poincar\'{e} formulation on the space of contravariant tensor fields \eqref{EP-TQ}. One is on the Lie subalgebra $\mathfrak{s}$ in \eqref{s} and the other is on the complementary Lie subalgebra  $\mathfrak{s}$ in \eqref{n}. The former one corresponds to the Euler-Poincar\'{e} formulation of isentropic compressible fluid flow. Let us start to examine this case. We recall the Lie algebra bracket \eqref{pa} on the subalgebra $\mathfrak{s}=\mathcal{F}(\mathcal{Q})\rtimes \mathfrak{X}(\mathcal{Q})$. A direct observation reads that, by choosing $\mathfrak{g}=\mathcal{F}(\mathcal{Q})$ and $\mathfrak{h}=\mathfrak{X}(\mathcal{Q})$, the bracket \eqref{pa} is precisely fitting 
the abstract framework in \eqref{mpla-left}. Here, the left action of $\mathfrak{X}(\mathcal{Q})$ on $\mathcal{F}(\mathcal{Q})$ is the directional derivative. 
So, it determines a left semidirect product algebra where the algebra on $\mathcal{F}(\mathcal{Q})$ is trivial. In this case, the coadjoint action is trivial on $\mathfrak{g}$. Accordingly, we refer to \eqref{mEP-1-left} for the Euler-Poincar\'e equations generated by a Lagrangian function $\mathfrak{L}=\mathfrak{L}(\sigma,Y)$ on $\mathfrak{s}$, as follows
\begin{equation}\label{mEP-1-left-fluid}
 \frac{d}{dt}\frac{\delta \mathfrak{L}}{\delta \sigma}
     = \frac{\delta \mathfrak{L}}{\delta \sigma}\overset{\ast}{\vartriangleleft} Y, \qquad \frac{d}{dt}\frac{\delta \mathfrak{L}}{\delta Y}
     = - \ad_{Y}^{\ast}
\frac{\delta\mathfrak{L}}{\delta Y}
-\mathfrak{b}_{\sigma}^{\ast}\frac
{\delta\mathfrak{L}}{\delta \sigma}.
\end{equation}
To have a more explicit calculation, we need to determine the right hand sides of the equations in \eqref{mEP-1-left-fluid}. A direct computation shows that
\begin{equation}
\begin{split}
\frac{\delta \mathfrak{L}}{\delta \sigma}\overset{\ast}{\vartriangleleft} Y&=-\mathcal{L}_Y \frac{\delta \mathfrak{L}}{\delta \sigma} - \frac{\delta \mathfrak{L}}{\delta \sigma} {\rm div}Y, \\ \ad_{Y}^{\ast}
\frac{\delta\mathfrak{L}}{\delta Y}&= \mathcal{L}_Y \frac{\delta\mathfrak{L}}{\delta Y}+{\rm div}Y\frac{\delta\mathfrak{L}}{\delta Y}
\\
\mathfrak{b}_{\sigma}^{\ast}\frac
{\delta\mathfrak{L}}{\delta \sigma}&=\frac{\delta \mathfrak{L}}{\delta \sigma}d\sigma.
\end{split}
\end{equation}
As a result, the Euler-Poincar\'{e} equations are 
\begin{equation}\label{EP-fl}
\frac{d}{dt}\frac{\delta \mathfrak{L}}{\delta \sigma}=-\mathcal{L}_Y \frac{\delta \mathfrak{L}}{\delta \sigma} - \frac{\delta \mathfrak{L}}{\delta \sigma} {\rm div}Y, \qquad 
\frac{d}{dt}\frac{\delta \mathfrak{L}}{\delta Y}
=-\mathcal{L}_Y \frac{\delta\mathfrak{L}}{\delta Y}-{\rm div}Y\frac{\delta\mathfrak{L}}{\delta Y}-\frac{\delta \mathfrak{L}}{\delta \sigma}d\sigma.
\end{equation}

\smallskip 

Now, we will examine the Euler-Poincar\'{e} 
flow on higher order ($\mathbf{m}\geq 2$) covariant tensor spaces. As mentioned before, the space of symmetric covariant tensors of order  $\mathbf{m}\geq 2$ determines a Lie subalgebra $\mathfrak{n}$ which is depicted in \eqref{n}. This permits us to determine an Euler-Poincar\'{e} flow over $\mathfrak{n}$. Assume a Lagrangian function $\mathfrak{L}=\mathfrak{L}(\sum_{\textbf{k}=2} \mathbb{X}^{\textbf{k}})$ depending on higher order tensors. Then, 
we write 
\begin{equation}\label{EP-hm}
\frac{d}{dt}\frac{\delta \mathfrak{L}}{\delta \mathbb{X}^\mathbf{m}} = - \sum_{\textbf{k} = 2}^\infty\,  \big({\rm L}_{\mathbb{X}^{\textbf{k}}} \frac{\delta \mathfrak{L}}{\delta \mathbb{X}^\mathbf{m+k-1}} + {\rm div} \mathbb{X}^{\textbf{k}}  \lrcorner  \frac{\delta \mathfrak{L}}{\delta \mathbb{X}^\mathbf{m+k-1}} \big),
\end{equation}
where we employed the notations in \eqref{Lie-gen} and \eqref{div-cont}, in a respective order. 

\smallskip 

A direct observation gives that, merely putting together the Euler-Poincar\'{e} equations \eqref{EP-fl} governing the fluid motion  and the Euler-Poincar\'{e} equations \eqref{EP-hm}  governing the covariant tensors of order $\mathbf{m}\geq 2$ will not be equal to the collective motion of the all covariant tensors  fields on $\mathfrak{T}\mathcal{Q}$ exhibited in \eqref{EP-TQ}. The extra terms arising in this coupling are manifestations of the mutual actions of the constitutive subalgebras $\mathfrak{s}$ and $\mathfrak{n}$ on each other. These mutual actions are computed in an abstract framework in \eqref{actions}. Considering the matched pair Euler-Poincar\'{e} equations \eqref{mEP-1}, we first compute the dual actions \cite{EsSu21}
\begin{eqnarray}
(\frac{\delta \mathfrak{L}}{\delta \sigma},\frac{\delta \mathfrak{L}}{\delta Y} )\overset{\ast }{\vartriangleleft}\big(\sum_{\textbf{k}=2} \mathbb{X}^{\textbf{k}}\big)&=&(-\mathbb{X}^{2}\ast \frac{\delta \mathfrak{L}}{\delta Y} - {\rm div}\mathbb{X}^2  \lrcorner   \frac{\delta \mathfrak{L}}{\delta Y},0) \label{dualaction-I}
\\ 
(\sigma,Y)\overset{\ast }{\vartriangleright} \Big( \bigoplus_{\textbf{m}=2}  \frac{\delta \mathfrak{L}}{\delta \mathbb{X}^\mathbf{m}}\Big) &=& \Big ( \C{L}_Y \frac{\delta \mathfrak{L}}{\delta \mathbb{X}^\mathbf{2}}+ {\rm div}Y  \frac{\delta \mathfrak{L}}{\delta \mathbb{X}^\mathbf{2}},   
 \label{dualaction-II}
\\ && \bigoplus_{\textbf{m}=3}^\infty \big( \C{L}_Y \frac{\delta \mathfrak{L}}{\delta \mathbb{X}^\mathbf{m}} + {\rm div}Y  \frac{\delta \mathfrak{L}}{\delta \mathbb{X}^\mathbf{m}} + \frac{\delta \mathfrak{L}}{\delta \mathbb{X}^\mathbf{m-1}}\star \sigma \big )\Big )\nonumber
\end{eqnarray}
where the first one takes values in $\G{s}^*$ and the second takes values in $\G{n}^*$. Further, we compute the cross actions 
\begin{eqnarray}
\G{a}^\ast_{\big(\sum_{\textbf{k}=2} \mathbb{X}^{\textbf{k}}\big)}
 \bigoplus_{\textbf{m}=2}  \Big( \frac{\delta \mathfrak{L}}{\delta \mathbb{X}^\mathbf{m}}\Big)
&=& \Big (-\sum_{\textbf{k}=2}^\infty \big  ( \mathbb{X}^{\textbf{k+1}}\ast \frac{\delta \mathfrak{L}}{\delta \mathbb{X}^\mathbf{k}} + {\rm div}\mathbb{X}^{\textbf{k+1}} \lrcorner \frac{\delta \mathfrak{L}}{\delta \mathbb{X}^\mathbf{k}}\big ), \label{a-ex-dual}  \\ &&
- \sum_{\textbf{k}=2}^\infty \big ( {\rm L}_{\mathbb{X}^{\textbf{k}}} \frac{\delta \mathfrak{L}}{\delta \mathbb{X}^\mathbf{k}} + {\rm div} \mathbb{X}^{\textbf{k}}  \lrcorner  \frac{\delta \mathfrak{L}}{\delta \mathbb{X}^\mathbf{k}}\big )\Big  ),\nonumber 
\\
\G{b}^\ast_{(\s,Y)}(\frac{\delta \mathfrak{L}}{\delta \sigma},\frac{\delta \mathfrak{L}}{\delta Y}) &=& \frac{\delta \mathfrak{L}}{\delta Y} \star \sigma \in \G{T}^*_2\C{Q}.
 \label{b-ex-dual}
\end{eqnarray}
Now, we collect all these terms together and decompose the Euler-Poincar\'{e} equations \eqref{EP-comp-TQ} in the form of matched  Euler-Poincar\'{e} \eqref{mEP-1} as 
\begin{equation}
\begin{split}\label{mEP-1-TQQ}
\frac{d}{dt}(\frac{\delta \mathfrak{L}}{\delta \sigma},\frac{\delta \mathfrak{L}}{\delta Y}) &  =-\ad_{(\s,Y)}^{\ast}%
(\frac{\delta \mathfrak{L}}{\delta \sigma},\frac{\delta \mathfrak{L}}{\delta Y})+ (\frac{\delta \mathfrak{L}}{\delta \sigma},\frac{\delta \mathfrak{L}}{\delta Y})\overset{\ast}{\vartriangleleft}\big(\sum_{\textbf{k}=2} \mathbb{X}^{\textbf{k}}\big) \\&\hspace{2cm} + \G{a}^\ast_{\big(\sum_{\textbf{k}=2} \mathbb{X}^{\textbf{k}}\big)}\bigoplus_{\textbf{m}=2}  \Big( \frac{\delta \mathfrak{L}}{\delta \mathbb{X}^\mathbf{m}}\Big),
\\
\bigoplus_{\textbf{m}=2} \frac{d}{dt} \Big( \frac{\delta \mathfrak{L}}{\delta \mathbb{X}^\mathbf{m}}\Big) &  =-\ad_{\big(\sum_{\textbf{k}=2} \mathbb{X}^{\textbf{k}}\big)}^{\ast}%
 \bigoplus_{\textbf{m}=2}  \Big( \frac{\delta \mathfrak{L}}{\delta \mathbb{X}^\mathbf{m}}\Big)- (\s,Y)\overset{\ast}{\vartriangleright
} \bigoplus_{\textbf{m}=2}  \Big( \frac{\delta \mathfrak{L}}{\delta \mathbb{X}^\mathbf{m}}\Big) \\&\hspace{2cm} - \mathfrak{b}_{(\s,Y)}^{\ast}(\frac{\delta \mathfrak{L}}{\delta \sigma},\frac{\delta \mathfrak{L}}{\delta Y}) . 
\end{split}
\end{equation}
Here, the first terms on the right hand sides are those available in individual EP equations for isentropic fluid flow and the contravariant tensor fields on $\mathbf{m}\geq 2$. The second terms on the right hand sides of both lines are the dual operations in \eqref{dualaction-I} and \eqref{dualaction-II}, respectively. The third terms are cross actions in \eqref{a-ex-dual} and \eqref{b-ex-dual}, respectively.

\subsection{Decomposition of EP Dynamics on Hamiltonian Vector Fields}~

In \eqref{gccl}, we have represented the {\small GCCL} as a map from the space $\mathfrak{T}\mathcal{Q}$ of contravariant tensors to the space $\mathfrak{X}_{\mathrm{ham}}^{\ast }\left( T^{\ast }\mathcal{Q}\right)$ of Hamiltonian vector fields. In Proposition \ref{H_ham-decomp-Pro}, we have used it to arrive at a matched pair decomposition of $\mathfrak{X}_{\mathrm{ham}}^{\ast }\left( T^{\ast }\mathcal{Q}\right)$. This observation and Proposition \ref{phi-prop} motivate us to determine the dual mapping of {\small GCCL} for the matched pair analysis of the Euler-Poincar\'{e} flow on $\mathfrak{X}_{\mathrm{ham}} \left( T^{\ast }\mathcal{Q}\right)$. Let us begin by deriving the explicit realization of the dual mapping of {\small GCCL}  .

Being the dual of a Lie algebra homomorphism, the dual of {\small GCCL} is a momentum and Poisson mapping. Recall the dual space $\mathfrak{X}_{\mathrm{ham}}^{\ast }\left( T^{\ast }\mathcal{Q}\right)$ given in \eqref{X*}.  The dual mapping of {\small GCCL} is computed to be 
\begin{equation} \label{km}
\text{\small GCCL}^*: \mathfrak{X}_{\mathrm{ham}}^{\ast } ( T^{\ast }\mathcal{Q} )\longrightarrow   \mathfrak{T}^{\ast }\mathcal{Q}, \qquad \Pi  \mapsto \bigoplus_{k=0}^{\infty
}\int_{T^{\ast }_p\mathcal{Q}}\left( \theta _{\mathcal{Q}%
}^{k-1}\otimes \vartheta \right) d\textbf{p},
\end{equation}
where $\theta _{\mathcal{Q}}^{k-1}$ is the $\left( k-1\right) $-th
tensor power of the canonical one form $\theta _{\mathcal{Q}}$, and  $\theta^{-1} _{\mathcal{Q}}$ is assumed to be $1$. Here, $
\vartheta $ is a one-form on $T^{\ast }\mathcal{Q}$ given locally by 
\begin{equation}
\vartheta =-\left( k\Pi _{i}+\frac{\partial \Pi ^{j}}{\partial q^{j}}%
p_{i}\right) dq^{i}.
\end{equation}
According to matched pair decomposition of the Lie algebra $\mathfrak{X}_{\mathrm{ham}}^{\ast } ( T^{\ast }\mathcal{Q} )$ as announced in Proposition \eqref{H_ham-decomp-Pro}, we can decompose the dual space  as 
\begin{equation}
\mathfrak{X}_{\mathrm{ham}}^{\ast } ( T^{\ast }\mathcal{Q} )= {\mathfrak{s}}^{c\ast}\oplus {\mathfrak{n}}^{c\ast} \qquad \Pi\mapsto  (\Pi ^{\mathfrak{s\ast}},\Pi^{\mathfrak{n\ast}} ),
\end{equation}
where ${\mathfrak{s}}^{c\ast}$ and ${\mathfrak{n}}^{c\ast} $ are dual spaces of ${\mathfrak{s}}^{c }$ and ${\mathfrak{n}}^{c } $  given in \eqref{subem-s}. The identities in \eqref{dual-inc} read that
\begin{equation}
\text{\small GCCL}^*:{\mathfrak{s}}^{c\ast}\longrightarrow \mathfrak{s}^*,\qquad \Pi ^{\mathfrak{s\ast}} \mapsto (\rho,M)
\end{equation}
where we have 
\begin{equation}
\rho \left( q\right) =-\int_{T_{q}^{\ast }\mathcal{Q}}\frac{\partial \Pi ^{i}%
}{\partial q^{i}}d\textbf{p},\qquad M_{i}(q)=-\int_{T_{q}^{\ast }\mathcal{Q}%
}\left( \Pi _{i}+p_{i}\frac{\partial \Pi ^{j}}{\partial q^{j}}\right) d\textbf{p}
\label{mvtocf-1}
\end{equation}%
where $\rho $ is a real valued function on $\mathcal{Q}$ whereas  $M=M_{i}dq^{i}$ is a differential one-form on $\mathcal{Q}$.  In the density formulation, the first two moments \eqref{mvtocf-1} 
are called plasma-to-fluid map in momentum formulation \cite{MarsdenRatiu-book}. For the complementary dual space 
\begin{equation}
\text{\small GCCL}^*:{\mathfrak{n}}^{c\ast}\longrightarrow \mathfrak{n}^*,\qquad \Pi ^{\mathfrak{n\ast}}\mapsto \mathbf{A}=\bigoplus_{k=2}^{\infty
}\int_{T^{\ast }_p\mathcal{Q}}\left( \theta _{\mathcal{Q}%
}^{k-1}\otimes \vartheta \right) d\textbf{p}.
\end{equation}
 
Now, we rewrite the mathced pair Euler-Poincar\'{e} equation for the present case for a Lagrangian $\mathfrak{L}=\mathfrak{L}(X_h^{\mathfrak{s}},  X_h^{\mathfrak{n}})$ as
\begin{equation}
\begin{split}\label{mEP-1-aa}
\frac{d}{dt}\frac{\delta\mathfrak{L}}{\delta X_h^{\mathfrak{s}}}  &  =
-\ad_{X_h^{\mathfrak{s}}}^{\ast}%
\frac{\delta\mathfrak{L}}{\delta X_h^{\mathfrak{s}}}
+ \frac{\delta\mathfrak{L}}{\delta X_h^{\mathfrak{s}}
}\overset{\ast}{\vartriangleleft} X_h^{\mathfrak{n}} + \mathfrak{a}_{X_h^{\mathfrak{n}}}^{\ast}\frac
{\delta\mathfrak{L}}{\delta X_h^{\mathfrak{n}}},\\
\frac{d}{dt}\frac{\delta\mathfrak{L}}{\delta X_h^{\mathfrak{n}}}  &  =-\ad_{X_h^{\mathfrak{n}}}^{\ast}%
\frac{\delta\mathfrak{L}}{\delta X_h^{\mathfrak{n}}}- X_h^{\mathfrak{s}} \overset{\ast}{\vartriangleright
}\frac{\delta\mathfrak{L}}{\delta X_h^{\mathfrak{n}}} - \mathfrak{b}_{X_h^{\mathfrak{s}}}^{\ast}\frac
{\delta\mathfrak{L}}{\delta X_h^{\mathfrak{s}}}. 
\end{split}
\end{equation}
We wish to determine each of the components in \eqref{mEP-1-aa}. First, we recall the identity \eqref{dual-coad} for the individual coadjoint flows
\begin{equation}
\begin{split}
\text{\small GCCL}^*\circ \ad_{X_h^{\mathfrak{s}}}^{\ast}%
\frac{\delta\mathfrak{L}}{\delta X_h^{\mathfrak{s}}}&=
-\ad^*_{X_{\widehat{(\sigma,Y)}}}\circ \text{\small GCCL}^*\big( \frac{\delta\mathfrak{L}}{\delta X_h^{\mathfrak{s}}}\big),
\\
\text{\small GCCL}^*\circ \ad_{X_h^{\mathfrak{n}}}^{\ast}%
\frac{\delta\mathfrak{L}}{\delta X_h^{\mathfrak{n}}}&=
-\ad^*_{X_{\widehat{\mathbf{X}}}}\circ \text{\small GCCL}^*\big( \frac{\delta\mathfrak{L}}{\delta X_h^{\mathfrak{n}}}\big),
\end{split}
\end{equation}
Further, referring to Lemma \eqref{dual-b-a-}, we have the second terms on the right hand sides of \eqref{mEP-1-aa} as
\begin{equation} \label{dual-triangle-eta-xi-ham}
  \begin{split}
\Big(\text{\small GCCL}^* \big(\frac{\delta\mathfrak{L}}{\delta X_h^{\mathfrak{s}}
}\big)\Big) \overset{\ast }{\vartriangleleft} \mathbf{X} &=\text{\small GCCL}^*\big(
\frac{\delta\mathfrak{L}}{\delta X_h^{\mathfrak{s}}}
 \overset{\ast }{\vartriangleleft} X_h^{\mathfrak{n}}\big), 
\\
(\sigma,Y) \overset{\ast }{\vartriangleright}\text{\small GCCL}^* (\frac{\delta\mathfrak{L}}{\delta X_h^{\mathfrak{n}}}) &=\text{\small GCCL}^*(X_h^{\mathfrak{s}}\overset{\ast }{\vartriangleright}  \frac{\delta\mathfrak{L}}{\delta X_h^{\mathfrak{n}}} \big).
\end{split}
\end{equation}
In addition, we refer to Lemma \eqref{b-a-*} for the third terms on the right hand sides of \eqref{mEP-1-aa} as
  \begin{equation} \label{dual-b-a-ham}
  \begin{split}
\text{\small GCCL}^*\circ \mathfrak{b}^*_{X_h^{\mathfrak{s}}}\frac
{\delta\mathfrak{L}}{\delta X_h^{\mathfrak{s}}}&=-\mathfrak{b}^*_{(\sigma,Y)}\circ \text{\small GCCL}^*\frac
{\delta\mathfrak{L}}{\delta X_h^{\mathfrak{s}}}, 
\\
\text{\small GCCL}^*\circ \mathfrak{a}^*_{X_h^{\mathfrak{n}}} \frac
{\delta\mathfrak{L}}{\delta X_h^{\mathfrak{n}}}&=-\mathfrak{a}^*_{\mathbf{X}}\circ \text{\small GCCL}^*\frac
{\delta\mathfrak{L}}{\delta X_h^{\mathfrak{n}}}.
\end{split}
\end{equation}

As an example to this geometry, and following \cite{holm2009geodesic}, we define a pure quadratic Lagrangian functional on the space $\mathfrak{X}_{\mathrm{ham}}(T^\ast \mathcal{Q})$ of Hamiltonian vector fields as follows
\begin{equation} \label{Lag-geoV}
  \mathfrak{L}(X_h)=\frac{1}{2} \int_{T^\ast\mathcal{Q}} \langle \hat{Q}X_h,X_h\rangle \mu,
\end{equation}
where $\hat{Q}$ is a positive-definite symmetric operator on $\mathfrak{X}_{\mathrm{ham}}(T^\ast \mathcal{Q})$ and $\mu$ is the symplectic volume. It is easy to see that variation of the quadratic Lagrangian $\mathfrak{L}$ with respect to $X_h$ is $\hat{Q}X_h$. We assume that there exist a one-form $\Pi $ so that $\Pi =\hat{Q}X_h$ and the Hamiltonian vector field $X_h$ is written as a sum of two Hamiltonian vector fields $X_h^{\mathfrak{s}}$ and $X_h^{\mathfrak{n}}$ according to the decomposition in (\ref{H_ham-decomp-Pro}). In this decomposition $X_h^{\mathfrak{s}}$ is in the Lie subalgebra $\mathfrak{s}$, whereas $X_h^{\mathfrak{n}}$ is in the Lie subalgebra $\mathfrak{n}$. We further take that the positive-definite symmetric operator $\hat{Q}$ is decomposible as 
\begin{equation}
\hat{Q}=\left(
\begin{array}{cc}
\hat{Q}^{\mathfrak{s}} & 0 
\\ 
0 & \hat{Q}^{\mathfrak{n}}   
\end{array}
\right)
\end{equation}
so that the Lagrangian $\mathfrak{L}$ in \eqref{Lag-geoV} turns out to be a sum
\begin{equation}
\begin{split}
\mathfrak{L}( X_h^{\mathfrak{s}}, X_h^{\mathfrak{n}})&=\mathfrak{L}^{\mathfrak{s}}(X_h^{\mathfrak{s}})+\mathfrak{L}^{\mathfrak{n}}(X_h^{\mathfrak{n}}) \\&=
\frac{1}{2} \int_{T^\ast\mathcal{Q}} \langle \hat{Q}^{\mathfrak{s}} X_h^{\mathfrak{s}},X_h^{\mathfrak{s}} \rangle d\textbf{q}d\textbf{p}
+
\frac{1}{2} \int_{T^\ast\mathcal{Q}} \langle
 \hat{Q}^{\mathfrak{n}} X_h^{\mathfrak{n}},X_h^{\mathfrak{n}}
\rangle d\textbf{q}d\textbf{p},
\end{split}
\end{equation}
see that 
\begin{equation}
\frac{\delta \mathfrak{L}}{\delta X_h}=\left( \frac{\delta \mathfrak{L}^{\mathfrak{s}}}{\delta X_h^{\mathfrak{s}}},
\frac{\delta \mathfrak{L}^{\mathfrak{n}}}{\delta X_h^{\mathfrak{n}}}
 \right)
 =\left(\hat{Q}^{\mathfrak{s}}X_h^{\mathfrak{s}},\hat{Q}^{\mathfrak{n}}X_h^{\mathfrak{n}} \right)=\left(\Pi ^{\mathfrak{s}},\Pi ^{\mathfrak{n}}\right). 
\end{equation}
\section*{Acknowledgments} 

OE is gratefully acknowledge the support of T\"UB\.ITAK (the Scientific and Technological Research Council of Turkey) under the title "Matched pairs of Lagrangian and Hamiltonian Systems"  with the project number 117F426. OE is grateful Prof. Hasan G\"umral and Prof. Serkan S\"utl\"u for many discussions done around the problem addressed in the present work.

\end{document}